\newtheorem{lemma}{Lemma}
\newtheorem{theorem}{Theorem}
\newtheorem{remark}{Remark}
\begin{document}
\title{The Construction of Near-Optimal Universal Coding of Integers}

\author{Wei Yan,
        and Yunghsiang S. Han,~\IEEEmembership{Fellow,~IEEE}
\thanks{
W. Yan is with the College of Electronic Engineering, National University of Defense Technology, Hefei 230037, China, and also with the College of Computer Science and Technology, National University of Defense Technology, Changsha, 
410073, China. Email: yan.wei2023@nudt.edu.cn.}
\thanks{Y. S. Han is affiliated with the Shenzhen Institute for Advanced Study, University of Electronic Science and Technology of China, Shenzhen 518000, China. Email: yunghsiangh@gmail.com.}
}
\maketitle
\begin{abstract}
The Universal Coding of Integers~(UCI) is suitable for discrete memoryless sources with unknown probability distributions and infinitely countable alphabet sizes.
A UCI is a class of prefix codes for which the ratio of the average codeword length to $\max\{1,H(P)\}$ is within a constant expansion factor \textcolor{red}{$C_{\mathcal{C}}$} for any decreasing probability distribution $P$, where $H(P)$ is the entropy of $P$.
For any UCI code $\mathcal{C}$, \emph{the minimum expansion factor} \textcolor{red}{$C_{\mathcal{C}}^{*}$} is defined to represent the infimum of the set of extension factors of $\mathcal{C}$. Each $\mathcal{C}$ has a unique corresponding \textcolor{red}{$C_{\mathcal{C}}^{*}$}, and the smaller \textcolor{red}{$C_{\mathcal{C}}^{*}$} is, the better the compression performance of $\mathcal{C}$ is.
The class of UCIs $\mathcal{C}$ (or a family $\{\mathcal{C}_i\}_{i=1}^{\infty}$) that achieves the smallest \textcolor{red}{$C_{\mathcal{C}}^{*}$} is defined as the \emph{optimal UCI}.
The best current result is that the range of $C_{\mathcal{C}}^{*}$ for the optimal UCI is $2\leq C_{\mathcal{C}}^{*}\leq 2.5$.
In this paper, we prove a tighter probability inequality for decreasing distributions, which serves as a new tool for studying 
the properties of UCIs. On the basis of
this inequality, we prove that there exists a class of near-optimal UCIs, called the $\nu$ code, achieving \textcolor{red}{$C_\nu=2.0386$}. This narrows the range of the minimum expansion factor for the optimal UCI to $2\leq C_{\mathcal{C}}^{*}\leq 2.0386$. We show that the $\nu$ code is currently optimal in terms of the minimum expansion factor.
In addition, we propose a new proof showing that the minimum expansion factor of the optimal UCI is lower bounded by $2$.
\end{abstract}
\IEEEpeerreviewmaketitle
\begin{IEEEkeywords}
Elias coding, source coding, minimum expansion factor, variable-length codes, universal coding of integers.
\end{IEEEkeywords}

\section{Introduction}\label{sec_intro}
In their pioneering work~\cite{SH481,SH482} Shannon proposed a model for general communication systems, defined Shannon's entropy, initiated the study of source coding, and constructed the first variable-length code~(VLC), which was termed the Shannon--Fano coding.
Source coding improves the effectiveness of digital information storage, processing, and transmission methods by reducing redundancy.
VLCs form an essential source coding class that encodes individual source symbols into variable-length codewords.
The idea behind constructing a VLC is to map source symbols with higher occurrence probabilities into shorter codewords, thereby achieving a shorter average codeword length.
This idea emerged as early as the nineteenth century, when Morse developed Morse code.
When the probability distribution of discrete memoryless sources~(DMSs) is known, the Huffman code~\cite{H52} is the prefix code with the shortest average codeword length, where the prefix code is a special type of VLC that has any codeword that is not a prefix of any other codeword. The arithmetic code~\cite{P76,AC79} is another well-known source code; however, it is a string coding method that falls outside the scope of this work.

However, in practical applications, it is sometimes difficult to know precisely about the statistical properties of the source or even to know anything.
When the probability distribution of DMSs is unknown, the Universal Coding of Integers (UCI) is the best-known class of VLCs.
A systematic study of UCIs was initiated by Elias~\cite{Elias75} in 1975, and UCIs were determined to apply to DMSs of the countably infinite alphabet, with the assumption of no prior knowledge.
Today, UCIs are widely used in various areas, including data deduplication schemes~\cite{N19,LF22}, stream processing engines~\cite{Zhang23,Zhang24}, biological sequencing with data compression~\cite{DNA10,DNA13}, quantized stochastic gradient descent~\cite{NIPS}, and evolving secret sharing~\cite{YLH23,Cheng25}.

Elias~\cite{Elias75} considered the following universal coding problem.
The source $S=(\mathcal{A},P)$ is discrete and memoryless, where the alphabet $\mathcal{A}\triangleq\mathbb{N}=\{1,2,\cdots\}$ is countably infinite and the probability distribution $P$ is decreasing (i.e., \textcolor{red}{$P(a)\geq P(a+1)$} for all $a\in \mathcal{A}$ and $\sum_{k=1}^{\infty}P(k)=1$).\footnote{In this paper, we assume that any probability distribution $P$ is decreasing.}
Let $H(P)\triangleq-\sum_{k=1}^{\infty}P(k)\log_{2}P(k)$ be the Shannon entropy of $P$.
Let $\mathcal{C}$ be a prefix code for the DMS $S=(\mathcal{A},P)$; that is, $\mathcal{C}$ maps the positive integers onto binary codewords and satisfies the prefix property.
Let $L_{\mathcal{C}}(\cdot)$ be the length function such that $L_{\mathcal{C}}(a)=|\mathcal{C}(a)|$ for all $a\in \mathcal{A}$,
where $\mathcal{C}(a)$ denotes the $a$-th codeword of $\mathcal{C}$.
Let $A_{P}(L_{\mathcal{C}})\triangleq\sum_{k=1}^{\infty}P(k)L_{\mathcal{C}}(k)$ denote the average codeword length of $\mathcal{C}$ under the probability distribution $P$.
The prefix code $\mathcal{C}$ is said to be \emph{universal} if a constant $C_{\mathcal{C}}$ exists such that
\begin{equation}\label{eq1}
\frac{A_{P}(L_{\mathcal{C}})}{\max\{1,H(P)\}}\leq C_{\mathcal{C}}
\end{equation}
for all $P$ with $0<H(P)<\infty$.
The constant $C_{\mathcal{C}}$ is called the \emph{expansion factor} of UCI $\mathcal{C}$, and we let $S_{\mathcal{C}}$ denote the set consisting of all expansion factors of UCI $\mathcal{C}$.

Since Elias's pioneering work~\cite{Elias75}, many UCIs have been proposed, and they can be categorized into two main groups~\cite{C1990}:
\emph{the message length scheme} and \emph{the flag scheme}.
\begin{enumerate}
\item In the message length scheme, the process of encoding an integer $n\in\mathbb{N}$ can be divided into two parts,
where the suffix part encodes $n$ with a length of $m$ bits and the prefix part encodes $m$.
Note that the prefix part can be further encoded into two parts, and so forth.
The $\gamma$, $\delta$, and $\omega$ codes proposed by Elias~\cite{Elias75} are typical UCIs within the message length scheme.
  A UCI $\mathcal{C}$ in this scheme mainly involves keeping the codeword length $L_{\mathcal{C}}(n)$ as small as possible for sufficiently large values of $n\in\mathbb{N}$.
 When the length of the codeword at infinity is as small as possible, the length of the codeword at smaller integers is relatively large; thus, this research focuses on the case where the Shannon entropy is large. UCIs for such schemes were proposed in~\cite{Elias75,L68,ER78,S80,Y00}.
\item In the flag scheme, a specific string is used as a flag to identify the end of a codeword.
 For example, the Fibonacci code~\cite{AF87} is a typical UCI within the flag pattern scheme, and $11$ is its flag.
 The UCIs in the message length scheme can catastrophically propagate errors into subsequent codewords during the decoding process.
However, the flag scheme used by UCIs has a decoding synchronization property that prevents the infinite propagation of errors.
   UCIs for such schemes were proposed in~\cite{L81,AF87,W88,YO91,AS17}.
\end{enumerate}

Recently, Allison~\emph{et al.}~\cite{DCC21} analysed the main properties of Wallace tree codes and proved that they are UCIs. 
\cite{YL21,YL22} defined the \emph{minimum expansion factor} $C_{\mathcal{C}}^{*}\triangleq \inf S_{\mathcal{C}}$.
For any UCI $\mathcal{C}$, its minimum expansion factor $C_{\mathcal{C}}^{*}$ is unique.
The minimum expansion factor characterizes the compression performance of a UCI.
If a class of UCIs $\mathcal{C}$ or a family of UCIs $\{\mathcal{C}_i\}_{i=1}^{\infty}$ possesses the smallest minimum expansion factor,\footnote{The minimum expansion factor corresponding to a family of UCIs $\{\mathcal{C}_i\}_{i=1}^{\infty}$ is $\lim\limits_{i\rightarrow\infty}C_{\mathcal{C}_i}^*$.}
then $\mathcal{C}$ or $\{\mathcal{C}_i\}_{i=1}^{\infty}$ is said to be the \emph{optimal UCI}.
Yan \emph{et al.} proved that the range of $C_{\mathcal{C}}^{*}$ for the optimal UCI is $2\leq C_{\mathcal{C}}^{*}\leq 2.5$.
In addition, a generalized universal coding of integers related to UCIs has been proposed and investigated~\cite{ITW,YH24,Yan25}.

In this paper, we further narrow the range of $C_{\mathcal{C}}^{*}$ for the optimal UCI.
By establishing a tighter probability inequality for decreasing distributions and developing an improved research approach, we prove that the constructed $\nu$ code is currently optimal in terms of the minimum expansion factor.
The contributions of this paper are listed below.
\begin{enumerate}
\item Inspired by minimax regret data compression, we propose a new proof showing that the minimum expansion factor of the optimal UCI is lower bounded by $2$.
\item \textcolor{blue}{We prove a tighter probability inequality for decreasing distributions, which serves as a new tool for studying the properties of UCIs. On the basis of this inequality, we improve the research approach for determining the upper bound of the minimum expansion factor for the optimal UCI.}
\item We propose a new class of UCIs, termed the $\nu$ code, and prove that the $\nu$ code is currently optimal in terms of the minimum expansion factor with $C_{\mathcal{\nu}}=2.0386$. This reduces the upper bound of the minimum expansion factor for the optimal UCI from $2.5$ to $2.0386$.
\end{enumerate}

The paper is structured as follows.
Section \ref{pre} provides the relevant conclusions regarding UCIs and Abel's formula.
In Section \ref{new}, a new proof stating that the minimum expansion factor of the optimal UCI is lower bounded by $2$ is proposed.
Section \ref{sec_idea} presents a tighter probability inequality for decreasing distributions and introduces an improved research approach.
Section \ref{newcode} involves the construction of a new class of UCIs, termed the $\nu$ code.
In Section \ref{proof}, the upper bound of $C_{\nu}^{*}$ is proven.
Section \ref{sec_dis} provides the lower bound of $C_{\nu}^{*}$ and briefly compares the range of $C_{\nu}^{*}$ with that of $C_{\mathcal{C}}^{*}$ for the classic UCIs.
Section \ref{sec_con} concludes this work.

\section{Preliminaries}\label{pre}
In this section, we present the Elias $\delta$ code, recent advances regarding minimum expansion factors for UCIs and Abel's formula.
Table~\ref{tab3} presents the main notations used in this paper.

\begin{table}[t]
\centering
\color{red}
\caption{Main notations used in this paper}\label{tab3}
\begin{tabular}{c|l|c}
\hline \hline
Notation  & Description  & Definition \\
\hline \hline
$\mathcal{A}$    & The countably infinite alphabet  &   Section \ref{sec_intro}                 \\
\hline
$P$   &     The decreasing probability distribution &  Section \ref{sec_intro}     \\
\hline
$S=(\mathcal{A},P)$ & The discrete memoryless source with an alphabet $\mathcal{A}$ and a probability distribution $P$ &  Section \ref{sec_intro}          \\
\hline
$H(P)$   &    The Shannon entropy of $P$  & Section \ref{sec_intro}      \\
\hline
$\mathcal{C}$   &   The prefix code for $S=(\mathcal{A},P)$  &  Section \ref{sec_intro}         \\
\hline
$\mathcal{C}(a)$  & The $a$-th codeword of $\mathcal{C}$ for $a\in\mathcal{A}$ & Section \ref{sec_intro}                          \\
\hline
$L_{\mathcal{C}}(a)$  &   The length of $\mathcal{C}(a)$   & Section \ref{sec_intro}     \\
\hline
$A_P(L_{\mathcal{C}})$   & The average codeword length of $\mathcal{C}$ under a probability distribution $P$ & Section \ref{sec_intro}   \\
\hline
$C_{\mathcal{C}}$  & The expansion factor of UCI $\mathcal{C}$    &   Eq.~\eqref{eq1}         \\
\hline
$S_{\mathcal{C}}$ &  The set consisting of all expansion factors of UCI $\mathcal{C}$ & Section \ref{sec_intro}  \\
\hline
$C_{\mathcal{C}}^*$  & The minimum expansion factor of UCI $\mathcal{C}$    &  Section \ref{sec_intro}         \\
\hline
$\{0,1\}^{*}$   & The set consisting of all binary strings of finite length   &    Section \ref{pre}          \\
\hline
$|\xi|$     &    The length of string $\xi$             &   Section \ref{pre}    \\
\hline
$\mathcal{P}_\mathcal{A}$  & The set of all decreasing probability distributions over $\mathcal{A}$ with finite Shannon entropy   &  Section \ref{new}         \\
\hline
$\mathcal{UCI}$            &  The set of all UCIs    &  Section \ref{new}         \\
\hline
$C^*$                 &   The minimum expansion factor of the optimal UCI    &   Section \ref{new}      \\
\hline
$h(p)$               &   The binary entropy of $(p,1-p)$         &   Section \ref{new}      \\
\hline
$A_n$                  &   $A_n=\log_{2}P(n)-\log_{2}(1-P(1))+\log_{2}n$    &    Eq.~\eqref{eq40}                   \\
\hline
$C_n$                  &   $C_n=(1+\frac{1}{n})(1-\frac{1}{n})^{n-1}$    &    Section \ref{sec_idea}                   \\
\hline
$g_{(c_1,c_2)}\left(a,P(1)\right)$  &          $g_{(c_1,c_2)}\left(a,P(1)\right)=\frac{c_1}{c_2-\log_{2}\big(1-P(1)\big)}\Big(\log_{2}a-\log_{2}\big(1-P(1)\big)\Big)$  &
 Eq.~\eqref{eq45}     \\
\hline
$h_{(c_1,c_2)}\left(t,x\right)$ & $h_{(c_1,c_2)}\left(t,x\right)=\left(\frac{c_1}{c_2+x}-1\right)t+\frac{c_1x}{c_2+x}-1$  &
 Eq.~\eqref{eq46}     \\
\hline
  $D_{(c_1,c_2)}(P(1))$     &    $D_{(c_1,c_2)}(P(1))= \frac{c_1}{c_2-\log_{2}\big(1-P(1)\big)}$ &          \\
 $J_{(c_1,c_2)}(P(1),L)$    &  $J_{(c_1,c_2)}(P(1),L)\!=\! D_{(c_1,c_2)}(P(1))\log_{2}C_L\!+\!P(1)\Big[1\!+\!D_{(c_1,c_2)}(P(1))\big(\log_{2}P(1)\!-\!\log_{2}C_{L}\big)\Big]$  &   Eq.~\eqref{eq52}     \\
$R_{(c_1,c_2)}(P(1),L) $        & $R_{(c_1,c_2)}(P(1),L)=3\!+\!D_{(c_1,c_2)}(P(1))\left(\log_{2}\big(1\!-\!P(1)\big)+\sum_{n=2}^{L-1}\log_{2}C_{n} \!-\!(L\!-\!1)\log_{2}C_{L}\right)$   &      \\
$Q_{(c_1,c_2)}(P(1),P(2),L)$ &    $Q_{(c_1,c_2)}(P(1),P(2),L)=J_{(c_1,c_2)}(P(1),L)+P(2)R_{(c_1,c_2)}(P(1),L)$ &        \\
\hline \hline
\end{tabular}
\end{table}

\subsection{The Elias $\delta$ code}
Let $\{0,1\}^{*}$ denote all binary strings of finite length and $|\xi|$ denote the length of string $\xi$.
\textcolor{red}{Let an $\alpha$ code be a unary encoding~\cite{Elias75,book07}.
The unary code of $a\in\mathcal{A}$ consists of $a-1$ zeros followed by a single one.}
For example, $\alpha(1)=1$, and $\alpha(4)=0001$. Let a $\beta$ code be a binary representation.
For example, $\beta(4)=100$, and $\beta(11)=1011$. Let $[\beta(a)]$ denote the removal of the most significant bit $1$ of $\beta(a)$.
In particular, $[\beta(1)]$ is a null string.
Then, the codeword length is as follows.
\begin{equation*}
\begin{aligned}
			|\alpha(a)|&=a,   \\
			|\beta(a)|&=1+\lfloor \log_{2} a\rfloor,        \\
			|[\beta(a)]|&=\lfloor \log_{2} a\rfloor
\end{aligned}
\end{equation*}
for all $a\in\mathcal{A}$. The Elias $\gamma$ code~\cite{Elias75}: $\mathcal{A}\rightarrow \{0,1\}^{*}$ can be succinctly represented as
\[
	\gamma(a)=\textcolor{red}{\alpha}(|\beta(a)|)[\beta(a)]
\]
for all $a\in\mathcal{A}$.
The Elias $\delta$ code~\cite{Elias75}: $\mathcal{A}\rightarrow \{0,1\}^{*}$ can be succinctly represented as
\[
	\delta(a)=\gamma(|\beta(a)|)[\beta(a)]
\]
for all $a\in\mathcal{A}$. Afterward, we obtain
\begin{equation*}
\begin{aligned}
			L_\delta(a)&=\big|\gamma(|\beta(a)|)\big|+\big|[\beta(a)]\big|    \\
			           &=\big|\gamma(1+\lfloor \log_{2} a\rfloor)\big|+\lfloor \log_{2} a\rfloor          \\
                       &=\big|\alpha(|\beta(1+\lfloor \log_{2} a\rfloor)|)\big|+\big|[\beta(1+\lfloor \log_{2} a\rfloor)]\big|+\lfloor \log_{2} a\rfloor   \\
                       &=\big|\beta(1+\lfloor \log_{2} a\rfloor)\big|+\lfloor \log_{2}(1+\lfloor \log_{2} a\rfloor)\rfloor+\lfloor \log_{2} a\rfloor   \\
			           &=1+\lfloor\log_{2}a\rfloor+2\lfloor\log_{2}(1+\lfloor\log_{2}a\rfloor)\rfloor
\end{aligned}
\end{equation*}
for all $a\in\mathcal{A}$.
Table~\ref{tab1} lists the first 16 codewords of the Elias $\gamma$ and Elias $\delta$ codes.
Additional details related to the Elias code can be found at~\cite{Elias75}.
\begin{table*}[!t]
\centering
\caption{The first 16 codewords of the Elias $\gamma$ and Elias $\delta$ codes}\label{tab1}
\begin{tabular}{c|c|c}
\hline \hline
$n$  &  Elias $\gamma$ code  & Elias $\delta$ code    \\
\hline \hline
$1$   &      1        &  1              \\
$2$   &     01 0      &  010 0     \\
$3$   &     01 1      &  010 1      \\
$4$   &     001 00    &  011 00    \\
$5$   &     001 01    &  011 01     \\
$6$   &     001 10    &  011 10     \\
$7$   &     001 11    &  011 11     \\
$8$   &     0001 000  &  00100 000     \\
$9$   &     0001 001  &  00100 001    \\
$10$  &     0001 010  &  00100 010  \\
$11$   &    0001 011  &  00100 011     \\
$12$   &    0001 100  &  00100 100    \\
$13$   &    0001 101  &  00100 101   \\
$14$   &    0001 110  &  00100 110     \\
$15$   &    0001 111  &  00100 111    \\
$16$   &    00001 0000 & 00101 0000     \\
\hline \hline
\end{tabular}
\end{table*}

\subsection{The minimum expansion factor of UCIs}
The minimum expansion factor $C_{\mathcal{C}}^{*}$ of a UCI $\mathcal{C}$ characterizes the compression performance of $\mathcal{C}$; that is,
whatever the probability distribution $P$ of the source is, the average codeword length $A_{P}(L_{\mathcal{C}})$ is less than or equal to $C_{\mathcal{C}}^{*}$ times $\max\{1,H(P)\}$.
Elias~\cite{Elias75} proved that the minimum expansion factor $C_{\mathcal{C}}^{*}$ has a trivial lower bound of $1$, and Yan \emph{et al.}~\cite{YL21} first introduced a nontrivial lower bound for $C_{\mathcal{C}}^{*}$, as described below.
\begin{theorem}~\cite{YL21}\label{thm1}
The minimum expansion factor $C_{\mathcal{C}}^{*}\geq2$ for any UCI $\mathcal{C}$.
\end{theorem}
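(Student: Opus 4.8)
The plan is to fix an arbitrary UCI $\mathcal{C}$ with length function $L=L_{\mathcal{C}}$, normalize it, and then exhibit decreasing distributions that drive the ratio in \eqref{eq1} up to (at least) $2$; since every expansion factor must dominate all such ratios, this forces $C_{\mathcal{C}}^{*}=\inf S_{\mathcal{C}}\ge 2$. First I would reduce to the case where $L$ is non-decreasing: if $\mathcal{C}'$ is obtained from $\mathcal{C}$ by re-indexing the codewords so that $L_{\mathcal{C}'}(1)\le L_{\mathcal{C}'}(2)\le\cdots$, then $\mathcal{C}'$ has the same multiset of codeword lengths, hence is still a prefix code by Kraft, and for every decreasing $P$ the rearrangement inequality gives $A_{P}(L_{\mathcal{C}'})\le A_{P}(L_{\mathcal{C}})$ while $H(P)$ is unchanged; thus $S_{\mathcal{C}}\subseteq S_{\mathcal{C}'}$ and $C_{\mathcal{C}'}^{*}\le C_{\mathcal{C}}^{*}$, so it suffices to prove the bound for $\mathcal{C}'$. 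Hence assume $L$ non-decreasing and set $q(k)=2^{-L(k)}$, so $q$ is non-increasing and $\sum_{k}q(k)\le 1$.

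The main tool is the ``half-tilt'' family (the Rényi-$\tfrac12$ / Shtarkov-type reweighting that makes the minimax-regret connection): for each $n$ put $P_{n}(k)=\sqrt{q(k)}/S_{n}$ on $\{1,\dots,n\}$, where $S_{n}=\sum_{k=1}^{n}\sqrt{q(k)}$, together with the one-parameter interpolations obtained by scaling the weight of the last atom. Each $P_{n}$ is a decreasing distribution, and a direct computation yields the identity
\[
H(P_{n})=\log_{2}S_{n}+\tfrac12\sum_{k\le n}\frac{\sqrt{q(k)}}{S_{n}}\log_{2}\tfrac1{q(k)}=\log_{2}S_{n}+\tfrac12 A_{P_{n}}(L),
\qquad\text{so}\qquad A_{P_{n}}(L)=2H(P_{n})-2\log_{2}S_{n}.
\]
Therefore the expansion ratio attained by $P_{n}$ is $\bigl(2H(P_{n})-2\log_{2}S_{n}\bigr)/\max\{1,H(P_{n})\}$, which is $\ge 2$ as soon as $S_{n}\le 1$ and $H(P_{n})\ge 1$. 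The plan is to show that along this family one can reach $H=1$ while still $S_{n}\le 1$: as $n$ grows, $H(P_{n})$ increases continuously from $H(P_{1})=0$, and the displayed identity (which also forces $H(P_{n})\ge \log_{2}S_{n}+\tfrac12 L(1)$ and controls the rate of growth of $H$ against $\log_2 S_n$) is used to argue that the entropy crosses $1$ no later than the step at which $S_{n}$ first reaches $1$.

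The remaining, and I expect hardest, case is the ``front-loaded'' one, where $\sum_{k\ge 2}\sqrt{q(k)}$ is so small that the half-tilt family never attains entropy $1$ with $S_{n}\le 1$ (the normalized $\sqrt q$ stays too concentrated). Here I would argue directly: smallness of $\sum_{k\ge2}\sqrt{q(k)}$ forces $q(2)$ tiny, hence $L(2)=\log_{2}\tfrac1{q(2)}$ large, so the two-point distribution $P=(\tfrac12,\tfrac12)$ on $\{1,2\}$, with $H(P)=1$ and $A_{P}(L)=\tfrac12 L(1)+\tfrac12 L(2)$, already has ratio $\ge 2$; and the few borderline sub-cases in which $q(1),q(2)$ are both $\Theta(1)$ can only occur for $\gamma$-like codes (lengths $\gtrsim 2\log_{2}k$ on an initial stretch because the Kraft budget is spent early), for which the uniform distributions $U_{n}$ on $\{1,\dots,n\}$ give $A_{U_{n}}(L)=\tfrac1n\sum_{k\le n}L(k)\sim 2\log_{2}n=2H(U_{n})$. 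The real obstacle is making the transition between the half-tilt regime and the front-loaded regime quantitative and stitching the cases into one clean statement; the rest is bookkeeping with the displayed identity and elementary estimates for non-increasing summable sequences.
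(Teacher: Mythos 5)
Your reindexing reduction and the identity $H(P_n)=\log_2 S_n+\tfrac12 A_{P_n}(L)$ are both correct, but the case analysis built on that identity does not cover the codes the theorem is actually about, and this is not a matter of bookkeeping. Take the Elias $\delta$ code: $L_\delta=(1,4,4,5,5,5,5,8,\dots)$, a genuine UCI. Here $S_2=2^{-1/2}+2^{-2}\approx0.957\le1$ but $H(P_2)\approx0.83<1$ (and even the relaxed requirement $S_2\le 2^{H(P_2)-1}$ fails), $S_3>1$, and asymptotically $\log_2 S_n\sim\tfrac12\log_2 n$ while $H(P_n)\sim\log_2 n$, so the half-tilt ratio $2-2\log_2 S_n/H(P_n)$ tends to $1$, not $2$. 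The code is not ``front-loaded'' in your sense ($\sum_{k\ge2}\sqrt{q(k)}=\infty$) and not ``$\gamma$-like'' ($L_\delta(k)\sim\log_2 k$, so uniform distributions also give ratio $\to1$); it happens to be saved by the two-point distribution since $L_\delta(1)+L_\delta(2)=5\ge4$, but for a reason outside your trichotomy. Worse, modify $\delta$ by shortening the first two codewords to lengths $1$ and $2$ and adding a constant to the tail so that Kraft still holds (e.g.\ $L(1)=1$, $L(2)=2$, $L(k)=\lfloor\log_2k\rfloor+2\lfloor\log_2(1+\lfloor\log_2k\rfloor)\rfloor+3$ for $k\ge3$, with $\sum_{k\ge3}2^{-L(k)}\le\tfrac14$). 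This is still a UCI, yet the two-point distribution gives ratio $\tfrac32$, all uniforms give ratio $<2$ tending to $1$, and the half-tilt family has $S_2>1$ and asymptotic ratio $1$. None of your three families certifies the bound for this code. A further, smaller defect: for the interpolated distributions (scaling the last atom by $\lambda\le1$) the identity degrades to $A_{P_\lambda}(L)\le 2H(P_\lambda)-2\log_2 S_\lambda$, i.e.\ the inequality points the wrong way, so continuity arguments through the interpolation cannot deliver $A\ge2H$ even when they reach $H=1$ with $S\le1$.

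The underlying issue is that the Rényi-$\tfrac12$/Shtarkov tilt is the right extremal family for redundancy-type quantities, but the denominator here is $\max\{1,H(P)\}$, and the extremizers for that objective are nearly deterministic distributions with entropy just above $1$ and a very long, thin tail --- not tilts of $2^{-L}$. That is exactly the family the paper uses: $P_m(1)=1-\tfrac1m$ and $P_m(a)=\tfrac{1}{m2^m}$ for $a=2,\dots,2^m+1$, so that $H(P_m)=1+H(\tfrac1m)\to1$, while a Huffman/Kraft counting argument (the complete code with lengths $1,m+1,\dots,m+1$ is optimal for $P_m$, and any infinite prefix code must do strictly worse on these $2^m+1$ symbols) forces $A_{P_m}(L_{\mathcal C})\ge 2$ for \emph{every} UCI simultaneously, with no case analysis. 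If you want to salvage your plan, you would need to add this near-deterministic family as the main case rather than a borderline one; as written, the proposal's central mechanism fails on the standard universal codes and the gap is essential.
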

Theorem~\ref{thm1} was proven in~\cite{YL21} by the following lemma.
\begin{lemma}~\cite{YL21}
Suppose that $\mathcal{C}$ is any given UCI and that $K_1$ and $K_2$ are two positive constants.
\begin{enumerate}
\item If the length function $L_\mathcal{C}(\cdot)$ satisfies $L_\mathcal{C}(a)\geq K_1+K_2\lfloor\log_{2}a\rfloor$ for all $a\in\mathcal{A}$, then $C_{\mathcal{C}}^{*}\geq K_1+K_2$.
\item If $L_\mathcal{C}(\cdot)$ satisfies $L_\mathcal{C}(a)\leq L_\mathcal{C}(a+1)$ for all $a\in\mathcal{A}$, then $L_\mathcal{C}(a)\geq 1+\lfloor\log_{2}a\rfloor$ for all $a\in\mathcal{A}$.
\end{enumerate}
\end{lemma}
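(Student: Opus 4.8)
The plan is to prove the two statements independently. The second is a combinatorial property of prefix codes with non-decreasing length functions, and it follows from a Kraft-type count; the first is an ``adversarial distribution'' argument that turns the pointwise lower bound $L_{\mathcal C}(a)\ge K_1+K_2\lfloor\log_2 a\rfloor$ into the lower bound $K_1+K_2$ on every admissible expansion factor of $\mathcal C$.

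For the first statement, I would construct, for each small $\varepsilon>0$, a decreasing distribution $P_\varepsilon$ that keeps almost all of its mass on the symbol $1$ while spreading the residual mass $\varepsilon$ as flatly as possible over a long dyadic range, tuned so that $H(P_\varepsilon)\to 1$ and $\sum_{k}P_\varepsilon(k)\lfloor\log_2 k\rfloor\to 1$ simultaneously. Concretely: set $M=2^{\lceil 1/\varepsilon\rceil}$ and let $P_\varepsilon(1)=1-\varepsilon$, $P_\varepsilon(k)=\varepsilon/(M-1)$ for $2\le k\le M$, and $P_\varepsilon(k)=0$ for $k>M$; for $\varepsilon<1/2$ this is a legitimate decreasing distribution with $0<H(P_\varepsilon)<\infty$. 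Two short estimates carry the argument: expanding $-\sum_k P_\varepsilon(k)\log_2 P_\varepsilon(k)$ gives $H(P_\varepsilon)=1+O\big(\varepsilon\log(1/\varepsilon)\big)$, and the identity $\sum_{k=1}^{2^m}\lfloor\log_2 k\rfloor=(m-2)2^m+m+2$ gives $\sum_{k}P_\varepsilon(k)\lfloor\log_2 k\rfloor=1+O(\varepsilon)$. Using the hypothesis together with $\sum_k P_\varepsilon(k)=1$ then yields $A_{P_\varepsilon}(L_{\mathcal C})=\sum_k P_\varepsilon(k)L_{\mathcal C}(k)\ge K_1\sum_k P_\varepsilon(k)+K_2\sum_k P_\varepsilon(k)\lfloor\log_2 k\rfloor=K_1+K_2+O(\varepsilon)$, so $A_{P_\varepsilon}(L_{\mathcal C})/\max\{1,H(P_\varepsilon)\}\to K_1+K_2$ as $\varepsilon\to 0$. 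Hence for every $\eta>0$ there is a decreasing $P$ with ratio exceeding $K_1+K_2-\eta$, so no number strictly below $K_1+K_2$ belongs to $S_{\mathcal C}$, i.e. $C_{\mathcal C}^{*}=\inf S_{\mathcal C}\ge K_1+K_2$.

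For the second statement, fix $a\in\mathcal A$ and put $\ell=L_{\mathcal C}(a)$. Since $\mathcal C$ is a prefix code on the infinite alphabet $\mathbb N$, the Kraft inequality gives $\sum_{k=1}^{\infty}2^{-L_{\mathcal C}(k)}\le 1$; as every codeword is a finite string, the tail $\sum_{k=a+1}^{\infty}2^{-L_{\mathcal C}(k)}$ is a strictly positive quantity, whence $\sum_{k=1}^{a}2^{-L_{\mathcal C}(k)}<1$. By non-decrease, $L_{\mathcal C}(k)\le\ell$ for all $k\le a$, so $a\,2^{-\ell}\le\sum_{k=1}^{a}2^{-L_{\mathcal C}(k)}<1$; thus $2^{-\ell}<1/a$, i.e. $\ell>\log_2 a$, and since $\ell$ is an integer this is precisely $\ell\ge 1+\lfloor\log_2 a\rfloor$. (If one wishes to avoid citing Kraft, the same bound comes from counting in the code tree: a codeword of length $j\le\ell$ is a prefix of exactly $2^{\ell-j}$ strings of length $\ell$, these sets are pairwise disjoint by the prefix property, and the length-$\ell$ prefix of $\mathcal C(a+1)$ lies in none of them, so $\sum_{k=1}^{a}2^{\ell-L_{\mathcal C}(k)}\le 2^{\ell}-1$.)

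The step I expect to be the main obstacle is the joint tuning in the first statement. The naive candidate distributions --- uniform on $\{1,\dots,M\}$, or the harmonic law $P(k)\propto 1/k$ on $\{1,\dots,M\}$ --- make the ratio converge only to $K_2$, because their entropies grow without bound and swamp the constant $K_1$. One therefore has to keep $H(P)$ pinned essentially at $1$, which forces a heavy atom at $k=1$, while still pushing $\sum_k P(k)\lfloor\log_2 k\rfloor$ up to $1$; this is exactly why the residual mass must be spread out to about $k\approx 2^{1/\varepsilon}$, and no farther. The inequality $\sum_k P(k)\lfloor\log_2 k\rfloor\le H(P)$, which holds for every decreasing $P$ because $P(k)\le 1/k$, confirms both that $1$ is the right common target and that it cannot be beaten; after that, the two asymptotic estimates above are routine bookkeeping.
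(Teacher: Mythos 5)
The paper does not actually prove this lemma; it is quoted from \cite{YL21} and used as a black box, so there is no in-paper argument to compare against. Your proof is correct as it stands. Part 2 is the standard Kraft/tree-counting argument, and the key observation that the tail $\sum_{k>a}2^{-L_{\mathcal C}(k)}$ is strictly positive (because the code must cover all of $\mathbb N$ with finite codewords) is exactly what upgrades $a\,2^{-\ell}\le 1$ to the strict inequality needed to conclude $\ell\ge 1+\lfloor\log_2 a\rfloor$ even when $a$ is a power of $2$. Part 1 is also sound: your $P_\varepsilon$, with a heavy atom at $1$ and mass $\varepsilon$ spread uniformly over $\{2,\dots,2^{\lceil 1/\varepsilon\rceil}\}$, is essentially the same family of near-deterministic distributions $P_m$ that the authors deploy in Section \ref{new} for their new minimax proof of Theorem \ref{thm1} (there with $\varepsilon=1/m$ and support $\{2,\dots,2^m+1\}$), and your two asymptotic estimates ($H(P_\varepsilon)\to 1$ and $\sum_k P_\varepsilon(k)\lfloor\log_2 k\rfloor\to 1$) are exactly the right bookkeeping; the only caveat worth making explicit is that the $O(\varepsilon)$ correction in the numerator must be bounded below (you have $\sum_k P_\varepsilon(k)\lfloor\log_2 k\rfloor\ge 1-2\varepsilon$), which your construction does deliver. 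Your closing remark that $\sum_k P(k)\lfloor\log_2 k\rfloor\le H(P)$ for decreasing $P$ correctly explains why $K_1+K_2$ is the best constant obtainable this way.
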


Table \ref{tab2}~\cite{YL22} summarizes the latest research results that have been published regarding the minimum expansion factors $C_{\mathcal{C}}^{*}$ of some UCIs.
In Table \ref{tab2}, the $\iota$ code~\cite{YL22} constructed by Yan \emph{et al.} can reach an expansion factor $C_{\iota}=2.5$.
Thus, the range of $C_{\mathcal{C}}^{*}$ for the optimal UCI is $2\leq C_{\mathcal{C}}^{*}\leq 2.5$.
\begin{table}[t]
\centering
\color{violet}
\caption{The minimum expansion factors $C_{\mathcal{C}}^{*}$ of some UCIs}\label{tab2}
\begin{tabular}{|c|c|}
\hline
Code  & The range of $C_{\mathcal{C}}^{*}$   \\
\hline
$\gamma$ code~\cite{Elias75}       &     $C_{\gamma}^{*}=\mathbf{3}$                  \\
$\delta$ code~\cite{Elias75}   &    $2.5 \leq C_{\delta}^{*}\leq \mathbf{2.75} $      \\
$\omega$ code~\cite{Elias75}   &    $2.1 < C_{\omega}^{*}\leq \mathbf{3} $    \\
$CE(k)$ code~\cite{A1993}         &     $2+k\leq C_{CE(k)}^{*}\leq \mathbf{2+k+\frac{1}{2^{k}}} $           \\
$\eta$ code~\cite{YL21}    &   $ 2.5 \leq C_{\eta}^{*}\leq \mathbf{\frac{8}{3}} $    \\
$\theta$ code~\cite{YL21}    &   $ 2.5 \leq C_{\theta}^{*}\leq \mathbf{2.8}$  \\
\textbf{$\iota$ code}~\cite{YL22}   &     $C_{\iota}^{*}=\mathbf{2.5}$                \\
$\kappa$ code~\cite{YL22}   &   $ 2.5 \leq C_{\kappa}^{*}\leq \mathbf{\frac{8}{3}}$ \\
$\kappa[t]$ code~\cite{YL22}   &   $ 2.5 \leq C_{\kappa[t]}^{*}\leq \mathbf{2.5+\frac{1}{2t+2}}$   \\
\hline
\end{tabular}
\end{table}

\subsection{Abel's formula}
We introduce Abel's formula, which is used in the proof of this paper.
\begin{theorem}[Abel's formula]~\cite{Abel}
Suppose that $a_1, a_2, \ldots, a_n$ and $b_1, b_2, \ldots, b_n$ are two sets of real numbers.
Let $z_k\triangleq b_1+b_2+\cdots+b_k$ for $k=1, 2, \ldots, n$; then,
\[
  \sum_{i=1}^n a_{i}b_{i}=(a_1-a_2)z_1+(a_2-a_3)z_2+\cdots+(a_{n-1}-a_n)z_{n-1}+a_{n}z_{n}.
\]
\end{theorem}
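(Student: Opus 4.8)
The plan is to prove Abel's formula by the standard ``summation by parts'' manipulation: express each $b_i$ as a difference of consecutive partial sums $z_k$, substitute into the left-hand side, and then regroup the resulting sums by collecting the coefficient of each $z_k$.

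First I would introduce the convention $z_0\triangleq 0$, so that the relation $b_i=z_i-z_{i-1}$ holds uniformly for all $i=1,2,\ldots,n$ (in particular at $i=1$, since $z_1=b_1$). Substituting this into $\sum_{i=1}^n a_i b_i$ gives
\[
\sum_{i=1}^n a_i b_i=\sum_{i=1}^n a_i(z_i-z_{i-1})=\sum_{i=1}^n a_i z_i-\sum_{i=1}^n a_i z_{i-1}.
\]
Next I would reindex the second sum via $j=i-1$, turning $\sum_{i=1}^n a_i z_{i-1}$ into $\sum_{j=0}^{n-1} a_{j+1} z_j$; the $j=0$ term vanishes because $z_0=0$, so this equals $\sum_{j=1}^{n-1} a_{j+1} z_j$. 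Collecting the coefficient of $z_k$ across the two remaining sums, for $1\leq k\leq n-1$ the term $z_k$ receives $+a_k$ from $\sum_{i=1}^n a_i z_i$ and $-a_{k+1}$ from $\sum_{j=1}^{n-1} a_{j+1} z_j$, for a net coefficient $a_k-a_{k+1}$, while $z_n$ appears only in the first sum with coefficient $a_n$. Hence
\[
\sum_{i=1}^n a_i b_i=\sum_{k=1}^{n-1}(a_k-a_{k+1})z_k+a_n z_n,
\]
which is exactly the claimed identity. An equivalent route is a short induction on $n$: the base case $n=1$ reads $a_1 b_1=a_1 z_1$, and the inductive step follows from $\sum_{i=1}^{n+1} a_i b_i=\big(\sum_{i=1}^{n} a_i b_i\big)+a_{n+1}b_{n+1}$ together with $b_{n+1}=z_{n+1}-z_n$ and a regrouping of the two $z_n$ contributions.

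This argument presents no genuine obstacle; the only place demanding care is the bookkeeping in the reindexing step — specifically, adopting the convention $z_0=0$ so that $b_i=z_i-z_{i-1}$ is valid at $i=1$, and observing that after the shift the sum over the $z_{i-1}$ runs only up to index $n-1$, which is precisely what leaves $a_n z_n$ as an isolated boundary term rather than being absorbed into a difference $a_n-a_{n+1}$.
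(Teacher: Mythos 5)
Your proof is correct and is essentially the same summation-by-parts manipulation as the paper's: the paper simply runs the regrouping in the opposite direction, expanding the right-hand side by collecting the coefficient of each $a_i$ and using $z_i-z_{i-1}=b_i$, whereas you start from the left-hand side, substitute $b_i=z_i-z_{i-1}$, and collect the coefficient of each $z_k$. The two directions are algebraically identical, and your bookkeeping (the convention $z_0=0$ and the reindexing that isolates the boundary term $a_nz_n$) is sound.
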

\begin{proof}
We obtain
\begin{equation*}
\begin{aligned}
        & \quad    (a_1-a_2)z_1+(a_2-a_3)z_2+\cdots+(a_{n-1}-a_n)z_{n-1}+a_{n}z_{n}\\
        &=  a_1z_1 + a_2(z_2-z_1)+\cdots+a_n(z_n-z_{n-1})  \\
        &=  a_1b_1 + a_2b_2 +\cdots + a_nb_n.
\end{aligned}
\end{equation*}
\end{proof}
\section{The Minimum Expansion Factor of the Optimal UCI}\label{new}
This section revisits the minimal expansion factor of the optimal UCI and provides a new proof of Theorem~\ref{thm1}.
The work in this section is inspired by minimax regret data compression~\cite[chap.13]{EIT}.

Let $\mathcal{P}_\mathcal{A}$ be the set of all decreasing probability distributions over a countably infinite alphabet $\mathcal{A}$ with finite Shannon entropy values.
Let $\mathcal{UCI}$ denote the set of all UCIs.
For any UCI $\mathcal{C}$, the minimum expansion factor $C_{\mathcal{C}}^{*}$ can be equivalently defined as follows:
\begin{equation*}
\begin{aligned}
     C_{\mathcal{C}}^{*}&= \inf S_\mathcal{C} \\
                        &= \inf \left\{C_{\mathcal{C}} \Big| \frac{A_{P}(L_{\mathcal{C}})}{\max\{1,H(P)\}}\leq C_{\mathcal{C}} \mbox{ for }\forall P\in\mathcal{P}_\mathcal{A} \right\}       \\
                        &= \sup_{P\in\mathcal{P}_\mathcal{A}} \frac{A_{P}(L_{\mathcal{C}})}{\max\{1,H(P)\}} \ .
\end{aligned}
\end{equation*}
Furthermore, the minimum expansion factor of the optimal UCI can be defined as follows:
\begin{equation} \label{eq2}
\begin{aligned}
     C^{*}&= \inf_{\mathcal{C}\in\mathcal{UCI}}  C_{\mathcal{C}}^{*} \\
          &= \inf_{\mathcal{C}\in\mathcal{UCI}} \sup_{P\in\mathcal{P}_\mathcal{A}} \frac{A_{P}(L_{\mathcal{C}})}{\max\{1,H(P)\}} \ .
\end{aligned}
\end{equation}
Thus, we can precisely describe the problem of finding an optimal UCI as finding the exact value of $C^*$ and constructing a UCI $\mathcal{C}$ or a family of UCIs $\{\mathcal{C}_i\}_{i=1}^{\infty}$ achieving $C^*$.
The work of \cite{YL21,YL22} can be described as having determined that the range of $C^*$ is $2\leq C^*\leq2.5$.
Next, we provide a new proof of Theorem~\ref{thm1}.
\begin{proof}
According to~Equation \eqref{eq2}, Theorem~\ref{thm1} is equivalent to $C^{*}\geq2$.
We prove that $C^{*}\geq2$ as follows.

First, we construct some probability distribution $P_m$, where $m$ is an integer greater than $1$.
\begin{equation*}
P_m(a)=\left\{\begin{array}{lll}
1-\frac{1}{m},            &\text{if } a=1\text{ ,}\\
\frac{1}{m\cdot2^{m}},         &\text{if } a=2,3,\ldots,2^{m}+1 \text{ ,}\\
0,                          &\text{otherwise.}   \\
\end{array}\right.
\end{equation*}
Letting $\mathcal{P}_\mathcal{A}(M)\triangleq\big\{P_m \big| 2\leq m\in\mathbb{N}\big\}$, it is clear that $\mathcal{P}_\mathcal{A}(M)\subsetneqq\mathcal{P}_\mathcal{A}  $. For all $2\leq m\in\mathbb{N}$, we obtain
\begin{equation*}
\begin{aligned}
  H(P_m)&=-P(1)\log_{2}P(1)-\sum_{a=2}^{2^{m}+1}P_m(a)\log_{2}P_m(a)\\
       &=-\Big(1-\frac{1}{m}\Big)\log_{2}\Big(1-\frac{1}{m}\Big)- \frac{2^{m}}{m\cdot 2^{m}}\log_{2}\frac{1}{m\cdot 2^{m}}\\
       &=1+\textcolor{red}{h\Big(\frac{1}{m}\Big)},\\
\end{aligned}
\end{equation*}
where \textcolor{red}{$h(p)\triangleq -p\log_{2}p-(1-p)\log_{2}(1-p)$ denotes the binary entropy of $(p,1-p)$}.

For all $\mathcal{C}\in\mathcal{UCI}$ and all $\widetilde{P}\in\mathcal{P}_\mathcal{A}$,
\[
 \sup_{P\in\mathcal{P}_\mathcal{A}} \frac{A_{P}(L_{\mathcal{C}})}{\max\{1,H(P)\}}\geq \frac{A_{\widetilde{P}}(L_{\mathcal{C}})}{\max\{1,H(\widetilde{P})\}}
\]
by definition. Hence, we have that
\begin{equation} \label{eq3}
  \inf_{\mathcal{C}\in\mathcal{UCI}} \sup_{P\in\mathcal{P}_\mathcal{A}} \frac{A_{P}(L_{\mathcal{C}})}{\max\{1,H(P)\}}\geq \inf_{\mathcal{C}\in\mathcal{UCI}}  \frac{A_{\widetilde{P}}(L_{\mathcal{C}})}{\max\{1,H(\widetilde{P})\}}
\end{equation}
for all $\widetilde{P}\in\mathcal{P}_\mathcal{A}$.
Based on~Equation \eqref{eq3} and the fact that $\mathcal{P}_\mathcal{A}(M)\subsetneqq\mathcal{P}_\mathcal{A}$, we have
\begin{equation} \label{eq4}
\begin{aligned}
     C^{*}& =  \inf_{\mathcal{C}\in\mathcal{UCI}} \sup_{P\in\mathcal{P}_\mathcal{A}} \frac{A_{P}(L_{\mathcal{C}})}{\max\{1,H(P)\}}   \\
          & \geq \sup_{P\in\mathcal{P}_\mathcal{A}} \inf_{\mathcal{C}\in\mathcal{UCI}} \frac{A_{P}(L_{\mathcal{C}})}{\max\{1,H(P)\}}   \\
          & \geq \sup_{P\in\mathcal{P}_\mathcal{A}(M)} \inf_{\mathcal{C}\in\mathcal{UCI}} \frac{A_{P}(L_{\mathcal{C}})}{\max\{1,H(P)\}}   \\
          & \geq \inf_{\mathcal{C}\in\mathcal{UCI}} \frac{A_{P_m}(L_{\mathcal{C}})}{\max\{1,H(P_m)\}}
\end{aligned}
\end{equation}
for all $2\leq m\in\mathbb{N}$.
When the probability distribution $P_m$ is given,\footnote{For any fixed $m$, the probability distribution $P_m$ is defined over a finite alphabet $\{1,2,\ldots,2^{m}+1 \}$.} the Huffman code~\cite{H52} is optimal; i.e., the average codeword length of the Huffman code is the shortest.
Therefore, we obtain
\begin{equation} \label{eq5}
\begin{aligned}
    \textcolor{red}{  C^{*} }&\textcolor{red}{  \geq \inf_{\mathcal{C}\in\mathcal{UCI}} \frac{A_{P_m}(L_{\mathcal{C}})}{\max\{1,H(P_m)\}} }  \\
           &  = \frac{1}{1+\textcolor{red}{h\big(\frac{1}{m}\big)}} \inf_{\mathcal{C}\in\mathcal{UCI}} A_{P_m}(L_{\mathcal{C}})   \\
          & \overset{(b)}>\frac{1}{1+\textcolor{red}{h\big(\frac{1}{m}\big)}}\Big[1\times\big(1-\frac{1}{m}\big)+2^m\times(m+1)\times\frac{1}{m\cdot2^{m}}  \Big]  \\
          & = \frac{2}{1+\textcolor{red}{h\big(\frac{1}{m}\big)}}
\end{aligned}
\end{equation}
for all $2\leq m\in\mathbb{N}$, where $(b)$ is true because $\mathcal{C}$ is a UCI, which cannot consider only the first $2^m+1$ codewords, and to make Kraft's inequality~\cite{kraft,EIT} hold, the average codeword length of the first $2^m+1$ codewords of a UCI code must therefore be strictly longer than that of a Huffman code.
From Equation~\eqref{eq5}, we have
\begin{equation} \label{eq6}
\begin{aligned}
    \textcolor{red}{ C^{*}}&\textcolor{red}{  \geq \sup_{2\leq m\in\mathbb{N}} \frac{2}{1+h\big(\frac{1}{m}\big)} }    \\
          & =\lim\limits_{m\rightarrow\infty} \frac{2}{1+\textcolor{red}{h\big(\frac{1}{m}\big)}}  \\
          & = 2.
\end{aligned}
\end{equation}
\end{proof}

{\color{blue}\section{Research Approaches for Determining the Upper Bound of $C^*$} \label{sec_idea}
In this section, we review the research approaches that have been used to address the upper bound of $C^*$ and provide inequalities that are tighter than those listed in previous work. On the basis of these inequalities, we propose an improved research approach.

\subsection{Review of research on the upper bound of $C^*$} \label{subsec}
The main work that has been conducted on the upper bound of $C^*$ is as follows.
In 1975, the expansion factor of the $\gamma$ code constructed by Elias~\cite{Elias75} was determined as $C_{\gamma}=3$, thus obtaining the first upper bound 3 for $C^*$.
Recently, the $\eta$~\cite{YL21} and $\iota$~\cite{YL22} codes constructed by Yan and Lin could achieve expansion factors $C_{\eta}=2.75$ and $C_{\iota}=2.5$, respectively; thus, the upper bound of $C^*$ was compressed to $2.5$.
The research approaches employed in these three studies were consistent, as detailed below.

First, for a given UCI $\mathcal{C}$, it must be proven the codeword length satisfies $L_\mathcal{C}(1)=1$ and $L_\mathcal{C}(a)\leq K_1+K_2\log_2a$ for all $2\leq a\in\mathcal{A}$, where $K_1$ and $K_2$ are two positive constants. Consequently, we obtain
\begin{equation*}
\begin{aligned}
           A_P(L_{\mathcal{C}}) & \leq P(1)+\sum_{a=2}^{\infty}P(a)\big(K_1+K_2\log_2a\big) \\
                                & =K_1+(1-K_1)P(1)+K_{2}\sum_{a=2}^{\infty}P(a)\log_2a.
\end{aligned}
\end{equation*}
Second, we need to use an inequality to obtain an upper bound for the expected value of $\log_2a$.
Elias~\cite{Elias75} used Wyner's inequality~\cite{Wyner72}:
\begin{equation}\label{eq34}
           \sum_{a=1}^{\infty}P(a)\log_2a=\sum_{a=2}^{\infty}P(a)\log_2a\leq H(P),
\end{equation}
and Yan \emph{et al.}~\cite{YL22} used
\begin{equation}\label{eq35}
           \sum_{a=2}^{\infty}P(a)\log_2a\leq H(P)+P(1)\log_2P(1).
\end{equation}
Next,
\begin{equation*}
          \frac{A_{P}(L_{\mathcal{C}})}{\max\{1,H(P)\}}\leq F_1\big(H(P),P(1)\big)
\end{equation*}
is obtained through the above inequality, where $F_1\big(H(P),P(1)\big)$ is a function of $H(P)$ and $P(1)$.
Finally, the upper bound of the function $F_1\big(H(P),P(1)\big)$ is analysed.

In short, three key points are involved in proving the upper bound of $C^*$.
\textcolor{red}{First, an inequality for the upper bound of the codeword length is needed, such as $L_\mathcal{C}(a)\leq K_1+K_2\log_2a$.
We call such an inequality a \emph{codeword length upper bound inequality}; such inequalities are related only to the codeword construction process and are independent of the underlying distribution.
Second, an inequality for the upper bound of the expected value of $\log_2a$ is needed, such as Inequalities~\eqref{eq34} and~\eqref{eq35}.
We call such inequalities \emph{probability inequalities for decreasing distributions}.
This type of inequality is the core of the proof.}
Third, the parameters of the upper bound function in the final analysis, for example, the parameters of $F_1$, are $H(P)$ and $P(1)$.
	
\subsection{The new probability inequality for decreasing distributions}
In this subsection, we provide a new probability inequality for decreasing distributions and
prove that this inequality is tighter than Inequalities~\eqref{eq34} and~\eqref{eq35};\footnote{Because  $H(P)+P(1)\log_2P(1)\leq H(P)$, Inequality \eqref{eq35} is tighter than Inequality \eqref{eq34}. Therefore, it is only necessary to show that the new inequality is tighter than Inequality \eqref{eq35}.} that is, it has a smaller upper bound.

Before the new inequality is given, we first define the relevant notations.
Let $T_P\triangleq \sup\{ a\in \mathcal{A} \mid P(a)>0 \}$ for any $P\in\mathcal{P}_\mathcal{A}$.
Note that if $P(a)>0$ for any $a\in \mathcal{A}$, then $T_{P}=+\infty$.
Let
\begin{equation} \label{eq40}
A_n\triangleq \log_{2}P(n)-\log_{2}(1-P(1))+\log_{2}n
\end{equation}
for all $n\in\mathbb{N}\cap[2,T_P]$.
When $T_{P}<+\infty$, $A_{n_0}$ is undefined for $n_0>T_{P}$.
However, owing to $0\log_{2}0\triangleq 0,\,$\footnote{In mathematics, $0\log_{2}0$ can be defined as $0\log_{2}0\triangleq \lim\limits_{x\rightarrow0^+}x\log_{2}x=0$.}
we can define
\[
  P(n_0)A_{n_0}=0\big[\log_{2}0-\log_{2}\big(1-P(1)\big)+\log_{2}n_0\big]\triangleq 0,
\]
for all $n_0>T_{P}$.
\begin{lemma}  \label{lemma4}
The inequality
\begin{equation} \label{eq36}
  \sum_{n=2}^{\infty}P(n)A_n \leq P(2)+ \sum_{n=2}^{\infty} P(n+1)\log_{2}\left(1+\frac{1}{n}\right)\left(1-\frac{1}{n}\right)^{n-1}
\end{equation}
holds for any $P\in\mathcal{P}_\mathcal{A}$.
Inequality~\eqref{eq36} can be equivalently rewritten as follows:
\begin{equation} \label{eq37}
  \sum_{n=2}^{\infty}P(n)\log_2n \leq H(P)-h\big(P(1)\big)+P(2)+ \sum_{n=2}^{\infty} P(n+1)\log_{2}\left(1+\frac{1}{n}\right)\left(1-\frac{1}{n}\right)^{n-1}.
\end{equation}
\end{lemma}
\begin{proof}
Let $B_n\triangleq \sum_{k=2}^{n}A_k$ for all $n\in\mathbb{N}\cap[2,T_P]$.
From the arithmetic mean--geometric mean inequality, we obtain
\[
  \left(\prod_{k=2}^{n}P(k)\right)^\frac{1}{n-1}\leq \frac{\sum_{k=2}^{n}P(k)}{n-1} \leq \frac{1-P(1)}{n-1}
\]
for any integer $n\geq2$. Furthermore, we have that
\begin{equation*}
\begin{aligned}
           B_n & = \log_{2}\prod_{k=2}^{n}P(k)-(n-1)\log_{2}\big(1-P(1)\big)+\log_{2}n!   \\
               & \leq  \log_{2}\left(\frac{1-P(1)}{n-1}\right)^{n-1}-\log_{2}\big(1-P(1)\big)^{n-1}+\log_{2}n!   \\
                          & = \log_{2}\frac{n!}{(n-1)^{n-1}}\,.
\end{aligned}
\end{equation*}
From Abel's formula, we obtain
\begin{equation} \label{eq25}
\begin{aligned}
         \sum_{n=2}^{m}P(n)A_n  & = \big(P(2)-P(3)\big)B_2+ \big(P(3)-P(4)\big)B_3+\cdots+ \big(P(m-1)-P(m)\big)B_{m-1}+P(m)B_m    \\
                                & = P(m)B_m + \sum_{n=2}^{m-1}\big(P(n)-P(n+1)\big)B_n           \\
                                & \leq  P(m)\log_{2}\frac{m!}{(m-1)^{m-1}} + \sum_{n=2}^{m-1}\big(P(n)-P(n+1)\big) \log_{2}\frac{n!}{(n-1)^{n-1}}           \\
                                & = P(2)+\sum_{n=2}^{m-1}P(n+1)\left(\log_{2}\frac{(n+1)!}{n^{n}}- \log_{2}\frac{n!}{(n-1)^{n-1}}  \right)\\
                                & = P(2)+\sum_{n=2}^{m-1}P(n+1)\log_{2}\left(\frac{(n+1)!}{n^{n}}\cdot\frac{(n-1)^{n-1}}{n!}\right)  \\
                                & = P(2)+\sum_{n=2}^{m-1}P(n+1)\log_{2}\left(1+\frac{1}{n}\right)\left(1-\frac{1}{n}\right)^{n-1}
\end{aligned}
\end{equation}
for all $m\in\mathbb{N}\cap[2,T_P]$.
We consider the following two cases.
\begin{enumerate}
\item When $T_P =+\infty$, owing to~Equation \eqref{eq25},
\begin{equation*}
\begin{aligned}
         \sum_{n=2}^{\infty}P(n)A_n  & = \lim_{m\rightarrow\infty}\sum_{n=2}^{m}P(n)A_n    \\
         & \leq \lim_{m\rightarrow\infty}\left[ P(2)+\sum_{n=2}^{m-1}P(n+1)\log_{2}\left(1+\frac{1}{n}\right)\left(1-\frac{1}{n}\right)^{n-1}   \right]  \\
         & = P(2)+\lim_{m\rightarrow\infty}\left[\sum_{n=2}^{m-1}P(n+1)\log_{2}\left(1+\frac{1}{n}\right)\left(1-\frac{1}{n}\right)^{n-1}   \right]  \\
         & = P(2)+ \sum_{n=2}^{\infty} P(n+1)\log_{2}\left(1+\frac{1}{n}\right)\left(1-\frac{1}{n}\right)^{n-1}.
\end{aligned}
\end{equation*}
\item When $T_P <+\infty$, we consider~Equation \eqref{eq25} by taking $m=T_P$; then,
\[
  \sum_{n=2}^{T_P}P(n)A_n \leq P(2)+\sum_{n=2}^{T_P-1}P(n+1)\log_{2}\left(1+\frac{1}{n}\right)\left(1-\frac{1}{n}\right)^{n-1}.
\]
Furthermore, we obtain
\begin{equation*}
\begin{aligned}
         \sum_{n=2}^{\infty}P(n)A_n  & = 0+\sum_{n=2}^{T_P}P(n)A_n  \\
         & \leq  P(2)+ \sum_{n=2}^{T_P-1}P(n+1)\log_{2}\left(1+\frac{1}{n}\right)\left(1-\frac{1}{n}\right)^{n-1}  \\
         & = P(2)+ \sum_{n=2}^{\infty} P(n+1)\log_{2}\left(1+\frac{1}{n}\right)\left(1-\frac{1}{n}\right)^{n-1}.
\end{aligned}
\end{equation*}
\end{enumerate}  
\end{proof}
At the end of this subsection, we prove that Inequality~\eqref{eq37} is tighter than Inequality~\eqref{eq35}.
However, we first need to prove an auxiliary lemma.
For convenience, let $C_n\triangleq\left(1+\frac{1}{n}\right)\left(1-\frac{1}{n}\right)^{n-1}$ for all $2\leq n\in\mathbb{N}$ in this \textcolor{red}{paper}.
\begin{lemma} \label{lemma5}
The sequence $\{C_n\}_{n=2}^\infty$ is strictly monotonically decreasing.
\end{lemma}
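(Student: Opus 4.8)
The plan is to pass from the integer index to a real variable and show the resulting function is strictly decreasing. Concretely, define
\[
f(x)\triangleq\ln\!\left[\left(1+\frac{1}{x}\right)\left(1-\frac{1}{x}\right)^{x-1}\right]=\ln(x+1)-\ln x+(x-1)\bigl(\ln(x-1)-\ln x\bigr)
\]
for $x>1$; this is well defined and differentiable there because $1-1/x>0$. Since $C_n=e^{f(n)}$, it suffices to prove that $f$ is strictly decreasing on $(1,\infty)$, for then $C_{n+1}=e^{f(n+1)}<e^{f(n)}=C_n$ for every integer $n\ge 2$.

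First I would differentiate $f$. A short computation, in which the two $\pm\frac{1}{x}$ contributions cancel, yields the clean expression
\[
f'(x)=\frac{1}{x+1}+\ln\!\left(1-\frac{1}{x}\right).
\]
So the whole lemma reduces to the elementary inequality $-\ln\!\left(1-\tfrac{1}{x}\right)>\tfrac{1}{x+1}$ for $x>1$. Writing $u=1/x\in(0,1)$, this is $-\ln(1-u)>\tfrac{u}{1+u}$, which follows immediately from the power series $-\ln(1-u)=u+\tfrac{u^2}{2}+\tfrac{u^3}{3}+\cdots>u$ together with $u>\tfrac{u}{1+u}$. Hence $f'(x)<0$ on $(1,\infty)$, $f$ is strictly decreasing, and the claim follows.

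There is really no serious obstacle; the only step demanding a little care is the derivative simplification, where one must verify that the terms $-\tfrac1x$ (from $\ln(x+1)-\ln x$) and $+\tfrac1x$ (from the product rule on $(x-1)(\ln(x-1)-\ln x)$) cancel, leaving the two-term formula above. One could instead argue purely discretely by writing
\[
\frac{C_{n+1}}{C_n}=\frac{n^2(n+2)}{(n+1)^3}\left(1+\frac{1}{n^2-1}\right)^{n-1}
\]
and bounding the last factor by $e^{1/(n+1)}$ via $(1+t)^k\le e^{kt}$, reducing to $e^{1/(n+1)}<\tfrac{(n+1)^3}{n^2(n+2)}$; but that inequality needs a Taylor-coefficient comparison plus a handful of small cases and is messier, so the monotone-function argument is the one I would present. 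As a byproduct the argument also records that $f(x)\to\ln(1/e)$, i.e.\ $C_n\downarrow 1/e$, which is worth noting.
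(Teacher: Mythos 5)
Your proof is correct, and it takes a genuinely different route from the paper's. The paper argues discretely: it factors $C_n=\left(1+\frac{1}{n}\right)\cdot\left(1-\frac{1}{n}\right)^{n-1}$, notes the first factor is trivially strictly decreasing, and shows the second is strictly decreasing by the classical arithmetic mean--geometric mean argument (applying AM--GM to $n-1$ copies of $\frac{n}{n-1}$ and one copy of $1$ to get $\left(\frac{n}{n-1}\right)^{n-1}<\left(\frac{n+1}{n}\right)^{n}$, i.e.\ the standard proof that $(1+1/n)^n$ increases), then multiplies the two positive decreasing factors. You instead interpolate by the real function $f(x)=\ln C_x$ and show $f'(x)=\frac{1}{x+1}+\ln\left(1-\frac{1}{x}\right)<0$ via $-\ln(1-u)>u>\frac{u}{1+u}$; your derivative computation and the final inequality both check out, as does your algebraic identity for $C_{n+1}/C_n$ in the alternative you sketch. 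The paper's route is purely algebraic and avoids calculus, which fits its elementary style; yours is arguably more systematic (no need to guess the right AM--GM configuration), yields the limit $C_n\downarrow 1/e$ for free, and would generalize to non-integer indices, at the mild cost of a derivative simplification that must be done carefully. Either proof is acceptable here.
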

\begin{proof}
First, $1+\frac{1}{n}$ is clearly strictly monotonically decreasing with respect to $n$.
Second, the sequence
$$\left\{\left(1-\frac{1}{n}\right)^{n-1}=\left(\frac{n-1}{n}\right)^{n-1}\right\}_{n=2}^\infty$$
being strictly monotonically decreasing is equivalent to the sequence  $\left\{\left(\frac{n}{n-1}\right)^{n-1}\right\}_{n=2}^\infty$ being strictly monotonically increasing.
From the arithmetic mean--geometric mean inequality, we have that
\begin{equation*}
\begin{aligned}
         \left(\frac{n}{n-1}\right)^{n-1} & = \underbrace{\frac{n}{n-1}\cdots\frac{n}{n-1}}_{n-1} \times 1 \\
                                          & < \left[ \frac{\frac{n}{n-1}\times(n-1)+1}{n}\right]^n   \\
         & = \left(\frac{n+1}{n}\right)^{n}.
\end{aligned}
\end{equation*}
In conclusion, $\{C_n\}_{n=2}^\infty$ is strictly monotonically decreasing.
\end{proof}
\begin{lemma}
The inequality
\begin{equation} \label{eq38}
  H(P)-h\big(P(1)\big)+P(2)+ \sum_{n=2}^{\infty} P(n+1)\log_{2}\left(1+\frac{1}{n}\right)\left(1-\frac{1}{n}\right)^{n-1}\leq H(P)+P(1)\log_2P(1)
\end{equation}
holds for any $P\in\mathcal{P}_\mathcal{A}$. Inequality~\eqref{eq38} can be equivalently rewritten as follows:
\begin{equation} \label{eq39}
   \big(1-P(1)\big)\log_2\big(1-P(1)\big)+P(2)+ \sum_{n=2}^{\infty} P(n+1)\log_{2}\left(1+\frac{1}{n}\right)\left(1-\frac{1}{n}\right)^{n-1}\leq 0.
\end{equation}
\end{lemma}
\begin{proof}
We prove that Inequality~\eqref{eq39} holds. Based on Lemma~\ref{lemma5}, we obtain
\begin{equation*}
\begin{aligned}
       &  \quad \big(1-P(1)\big)\log_2\big(1-P(1)\big)+P(2)+ \sum_{n=2}^{\infty} P(n+1)\log_{2}\left(1+\frac{1}{n}\right)\left(1-\frac{1}{n}\right)^{n-1} \\
          & = P(2)\log_{2}\left[2\big(1-P(1)\big)\right]+\sum_{n=2}^{\infty} P(n+1)\log_{2}\left[\big(1-P(1)\big)\left(1+\frac{1}{n}\right)\left(1-\frac{1}{n}\right)^{n-1}\right] \\
         & \leq P(2)\log_{2}\left[2\big(1-P(1)\big)\right]+\sum_{n=2}^{\infty} P(n+1)\log_{2}\left[\big(1-P(1)\big)\left(1+\frac{1}{2}\right)\left(1-\frac{1}{2}\right)\right]  \\
         & = P(2)\log_{2}\left[2\big(1-P(1)\big)\right]+\big(1-P(1)-P(2)\big)\log_{2}\left[\frac{3}{4}\big(1-P(1)\big)\right]  \\
         & = \big(1-P(1)\big)\log_{2}\left[\frac{3}{4}\big(1-P(1)\big)\right] +P(2)\log_2\frac{8}{3}.
\end{aligned}
\end{equation*}
Consider the following two cases.
\begin{enumerate}
\item When $0<P(1)<\frac{1}{2}$, we further obtain
\begin{equation*}
\begin{aligned}
     \big(1-P(1)\big)\log_{2}\left[\frac{3}{4}\big(1-P(1)\big)\right] +P(2)\log_2\frac{8}{3} &\leq  \big(1-P(1)\big)\log_{2}\left[\frac{3}{4}\big(1-P(1)\big)\right] +P(1)\log_2\frac{8}{3}   \\
     & = \big(1-P(1)\big)\log_{2}\Big(1-P(1)\Big)+\log_2\frac{3}{4}+P(1)\log_2\frac{32}{9}  \\
     &  \overset{(a)}< \big(1-\frac{1}{2}\big)\log_{2}\Big(1-\frac{1}{2}\Big)+\log_2\frac{3}{4}+\frac{1}{2}\log_2\frac{32}{9} \\
     & =0,
\end{aligned}
\end{equation*}
where $(a)$ is true because $\big(1-x\big)\log_{2}\Big(1-x\Big)+\log_2\frac{3}{4}+x\log_2\frac{32}{9}$ is strictly monotonically increasing over the interval $(0, \frac{1}{2})$.
\item When $\frac{1}{2}\leq P(1)<1$, we further obtain
\begin{equation*}
\begin{aligned}
     \big(1-P(1)\big)\log_{2}\left[\frac{3}{4}\big(1-P(1)\big)\right] +P(2)\log_2\frac{8}{3} &\leq  P(2)\log_{2}\left[\frac{3}{4}\big(1-P(1)\big)\right] +P(2)\log_2\frac{8}{3}   \\
     & = P(2)\log_{2}\left[2\big(1-P(1)\big)\right]   \\
     & \leq P(2)\log_{2}\left[2\Big(1-\frac{1}{2}\Big)\right]   \\
     & =0.
\end{aligned}
\end{equation*}
\end{enumerate}
\end{proof}

\subsection{Our research approach concerning the upper bound of $C^*$}
In this subsection, we present our approach for studying the upper bound of $C^*$ based on
the tighter probability inequality proposed in Lemma~\ref{lemma4}.

Inequality~\eqref{eq36} can equivalently be rewritten as follows:
\begin{small}
\begin{equation} \label{eq41}
  \sum_{a=3}^{\infty}\!P(a)\big[\log_{2}a-\log_{2}(1-P(1))\big]\! \leq\! H(P)+P(1)\log_{2}P(1)+P(2)\log_{2}\big(1\!-\!P(1)\big)+ \sum_{a=2}^{\infty} \!P(a+1)\log_{2}\left(1\!+\!\frac{1}{a}\right)\left(1\!-\!\frac{1}{a}\right)^{a\!-\!1}.
\end{equation}
\end{small}
On the basis of the research approaches summarized in Subsection~\ref{subsec},
we improve upon the three key points involved in the proof of the upper bound of $C^*$ according to Inequality~\eqref{eq41}.
First, the codeword length upper bound inequality is improved from the traditional $L_\mathcal{C}(a)\leq K_1+K_2\log_2a$ to
\begin{equation} \label{eq42}
  L_\mathcal{C}(a)\leq F_2\big(P(1)\big)\big[\log_{2}a-\log_{2}(1-P(1))\big],
\end{equation}
where $F_2\big(P(1)\big)$ represents a piecewise function.
This idea is inspired by the form on the left-hand side of Inequality~\eqref{eq41}.
At this time, Inequality~\eqref{eq42} can be more tightly scaled according to different values of $P(1)$.
Second, we use Inequality~\eqref{eq41}, which is tighter than Inequality~\eqref{eq35}, as the probability inequality for decreasing distributions. Since an infinite summation term is located on the right-hand side of Inequality~\eqref{eq41}, which is inconvenient for the analysis, an appropriate scaling process will also be performed at the end of this subsection.
Finally, owing to the form on the right-hand side of Inequality~\eqref{eq41}, we should ultimately obtain
\begin{equation*}
          \frac{A_{P}(L_{\mathcal{C}})}{\max\{1,H(P)\}}\leq F_3\big(H(P),P(1),P(2)\big)
\end{equation*}
where $F_3\big(H(P),P(1),P(2)\big)$ is a function of $H(P)$, $P(1)$ and $P(2)$.\footnote{In the subsequent proof described in this paper, using Equation~\eqref{eq22}, $F_3\big(H(P),P(1),P(2)\big)$ can be further simplified to $F_3\big(P(1),P(2)\big)$; that is, $F_3$ is independent of $H(P)$.}
This upper bound function $F_3$ is related not only to $P(1)$ but also to $P(2)$, making the analysis more precise.

The codeword length upper bound inequality is given after the code construction procedure.
At the end of this subsection, we provide a probability inequality for decreasing distributions that is convenient for analysis purposes; its form is roughly
\begin{equation*}
          \sum_{a=3}^{\infty}P(a)\left[\log_{2}a-\log_{2}\big(1-P(1)\big)\right]\leq H(P)+F_4(P(1))+P(1)F_5(P(1))+P(2)F_6(P(1)).
\end{equation*}
The specific form is given by the following lemma.
\begin{lemma}  \label{lemma6}
The probability inequality
\begin{equation*}
\begin{aligned}
         \sum_{a=3}^{\infty}P(a)\left[\log_{2}a-\log_{2}\big(1-P(1)\big)\right]
            & \leq H(P)+\log_{2}C_{L}+P(1)\big(\log_{2}P(1)-\log_{2}C_{L}\big)   \\
            &\quad + P(2)\left[ \log_{2}\big(1-P(1)\big)+\sum_{a=2}^{L-1}\log_{2}C_{a} -(L-1)\log_{2}C_{L}  \right]
\end{aligned}
\end{equation*}
holds for any $P\in\mathcal{P}_\mathcal{A}$ and all $2\leq L\in\mathbb{N}$. In particular, when $L=2$, 
\begin{equation*}
\begin{aligned}
         \sum_{a=3}^{\infty}P(a)\left[\log_{2}a-\log_{2}\big(1-P(1)\big)\right]
            & \leq H(P)+\log_{2}\frac{3}{4}+P(1)\left(\log_{2}P(1)-\log_{2}\frac{3}{4}\right)   \\
            &\quad + P(2)\left[ \log_{2}\big(1-P(1)\big)-\log_{2}\frac{3}{4} \right]
\end{aligned}
\end{equation*}
for any $P\in\mathcal{P}_\mathcal{A}$.
\end{lemma}
\begin{proof}
Based on Lemma~\ref{lemma4} and Lemma~\ref{lemma5}, we obtain
\begin{equation*}
\begin{aligned}
         \sum_{n=2}^{\infty}P(n)A_n  & \leq P(2)+ \sum_{a=2}^{\infty} P(a+1)\log_{2}C_{a} \\
         & \leq  P(2)+ \sum_{a=2}^{L-1}P(a+1)\log_{2}C_{a}+\sum_{a=L}^{\infty}P(a+1)\log_{2}C_{L} \\
         & = P(2)+ \sum_{a=2}^{L-1}P(a+1)\log_{2}C_{a}+\left(1-\sum_{a=1}^{L}P(a)\right)\log_{2}C_{L} \\
         & =\big(1-P(1)\big)\log_{2}C_{L}+P(2)(1-\log_{2}C_{L})+\sum_{a=2}^{L-1}P(a+1)(\log_{2}C_{a}-\log_{2}C_{L})  \\
         & \leq \big(1-P(1)\big)\log_{2}C_{L}+P(2)(1-\log_{2}C_{L})+\sum_{a=2}^{L-1}P(2)(\log_{2}C_{a}-\log_{2}C_{L}) \\
         & = \big(1-P(1)\big)\log_{2}C_{L}+P(2)\left(1+\sum_{a=2}^{L-1}\log_{2}C_{a}-(L-1)\log_{2}C_{L}\right).
\end{aligned}
\end{equation*}
Furthermore, we have that
\begin{equation*}
\begin{aligned}
         \sum_{a=2}^{\infty}P(a)\left[\log_{2}a-\log_{2}\big(1-P(1)\big)\right] & \leq H(P)+P(1)\log_{2}P(1)+ \big(1-P(1)\big)\log_{2}C_{L}  \\
            & \quad  +P(2)\left(1+\sum_{a=2}^{L-1}\log_{2}C_{a}-(L-1)\log_{2}C_{L}\right),\\
\end{aligned}
\end{equation*}
and hence, we can obtain
\begin{equation*}
\begin{aligned}
         \sum_{a=3}^{\infty}P(a)\left[\log_{2}a-\log_{2}\big(1-P(1)\big)\right] & \leq H(P)+P(1)\log_{2}P(1)+ \big(1-P(1)\big)\log_{2}C_{L}  \\
            & \quad  +P(2)\left(1+\sum_{a=2}^{L-1}\log_{2}C_{a}-(L-1)\log_{2}C_{L}-1+\log_{2}\big(1-P(1)\big)\right),\\
           & \leq H(P)+\log_{2}C_{L}+P(1)\big(\log_{2}P(1)-\log_{2}C_{L}\big)   \\
            &\quad + P(2)\left[ \log_{2}\big(1-P(1)\big)+\sum_{a=2}^{L-1}\log_{2}C_{a} -(L-1)\log_{2}C_{L}  \right].
\end{aligned}
\end{equation*}
\end{proof}}
\section{A Tighter Upper Bound for $C^{*}$ and the Corresponding Codeword Length Construction Schemes}\label{newcode}
In this section, we construct a new class of UCIs, which is termed the $\nu$ code.
The $\nu$ code has an expansion factor $C_{\nu}=2.0386$.
This leads to
\begin{equation}  \label{eq7}
\begin{aligned}
     C^{*}&= \inf_{\mathcal{C}\in\mathcal{UCI}}  C_{\mathcal{C}}^{*} \\
          &\leq C_{\nu}^{*}   \\
          &= \inf \left\{C_{\nu} \Big| \frac{A_{P}(L_{\mathcal{C}})}{\max\{1,H(P)\}}\leq C_{\nu} \mbox{ for }\forall P\in\mathcal{P}_\mathcal{A} \right\}       \\
          & \leq 2.0386.
\end{aligned}
\end{equation}
We first present the length function of the $\nu$ code and then provide its specific construction process.

\subsection{The length function of the $\nu$ code and its existence}
The goal of constructing the $\nu$ code is to ensure that the length of its codewords can satisfy
the codeword length upper bound inequality in the form of Equation~\eqref{eq42}.
Therefore, we directly define the codeword length of the $\nu$ code as follows.
For all $a\in\mathcal{A}$, the length function $L_{\nu}(a)$ is defined as
\begin{equation} \label{eq43}
  L_{\nu}(a) \triangleq L_{\delta}(a)+\Delta(a),
\end{equation}
where
\begin{equation*}
\Delta(a)\triangleq \left\{\begin{array}{lllll}
2,                  &\text{if }   a=6,7\text{ ,}\\
1,         &\text{if }   a=3,4,5\text{ ,}   \\
-1,           &\text{if } a=2\text{ or } \lfloor\log_{2}a\rfloor\in S_{\nu1}\text{ ,}\\
-2,           &\text{if } \lfloor\log_{2}a\rfloor\in S_{\nu2}\text{ ,}\\
0,                          &\text{otherwise,}   \\
\end{array}\right.
\end{equation*}
\begin{equation*}
\begin{aligned}
  S_{\nu1} & \triangleq \{7,15,16,17,\ldots,24,37,38,39,\ldots,50,68,69,70,\ldots,84\},  \\
  S_{\nu2} & \triangleq \{31,32,33,34,35,36,63,64,65,66,67\}.
\end{aligned}
\end{equation*}
Next, we prove the existence of a prefix code with such codeword lengths.
More precisely, we prove the following theorem.
\begin{theorem}
There exist prefix codes whose codeword lengths are $L_{\nu}(1), L_{\nu}(2),\ldots,L_{\nu}(a),\ldots.$
\end{theorem}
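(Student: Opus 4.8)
The approach is to reduce the statement to a single inequality: by the sufficiency direction of Kraft's inequality~\cite{kraft,EIT}, a binary prefix code with codeword lengths $\ell_1,\ell_2,\ldots$ exists whenever each $\ell_a$ is a positive integer and $\sum_{a=1}^{\infty}2^{-\ell_a}\le 1$. So it suffices to check that $L_{\nu}(a)\in\mathbb{N}$ for every $a$ and that $\sum_{a=1}^{\infty}2^{-L_{\nu}(a)}\le 1$. Positivity is immediate wherever $\Delta(a)\ge 0$; in the remaining cases $L_{\delta}(a)$ is already large enough to absorb the decrease, since $L_{\delta}(2)=4$ gives $L_{\nu}(2)=3$, and whenever $\lfloor\log_{2}a\rfloor\in S_{\nu1}\cup S_{\nu2}$ we have $\lfloor\log_{2}a\rfloor\ge 7$, hence $L_{\delta}(a)=1+\lfloor\log_{2}a\rfloor+2\lfloor\log_{2}(1+\lfloor\log_{2}a\rfloor)\rfloor\ge 1+7=8$ and $L_{\nu}(a)\ge L_{\delta}(a)-2\ge 6$.

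For the Kraft sum I would prove the sharper identity $\sum_{a=1}^{\infty}2^{-L_{\nu}(a)}=\sum_{a=1}^{\infty}2^{-L_{\delta}(a)}$; because the Elias $\delta$ code is a prefix code the right-hand side is at most $1$, and the theorem follows. Writing $L_{\nu}=L_{\delta}+\Delta$, the difference of the two Kraft sums equals $\sum_{a:\,\Delta(a)\neq 0}\bigl(2^{-\Delta(a)}-1\bigr)2^{-L_{\delta}(a)}$, and the support of $\Delta$ is the finite set $\{2,3,4,5,6,7\}$ together with the blocks $B_{k}\triangleq\{a:\lfloor\log_{2}a\rfloor=k\}$ for $k\in S_{\nu1}\cup S_{\nu2}$. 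A direct check shows that these blocks are pairwise disjoint and disjoint from $\{2,\ldots,7\}$ (the smallest relevant exponent is $k=7$, giving $B_{7}=\{128,\ldots,255\}$), so the difference splits into a finite piece supported on $\{2,\ldots,7\}$ and a piece supported on the blocks $B_{k}$, which can be handled separately.

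The block piece uses the fact that $L_{\delta}(a)$ depends on $a$ only through $k=\lfloor\log_{2}a\rfloor$ and that $|B_{k}|=2^{k}$, so that
\[
\sum_{a\in B_{k}}2^{-L_{\delta}(a)}=2^{k}\cdot 2^{-(1+k+2\lfloor\log_{2}(1+k)\rfloor)}=2^{-1-2\lfloor\log_{2}(1+k)\rfloor}.
\]
Consequently, replacing $L_{\delta}$ by $L_{\delta}-1$ on a block $B_{k}$ (the case $k\in S_{\nu1}$) increases the Kraft sum by $\tfrac{1}{2}\,2^{-2\lfloor\log_{2}(1+k)\rfloor}$, and replacing it by $L_{\delta}-2$ on $B_{k}$ (the case $k\in S_{\nu2}$) increases it by $\tfrac{3}{2}\,2^{-2\lfloor\log_{2}(1+k)\rfloor}$. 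Since $\lfloor\log_{2}(1+k)\rfloor$ is constant on each of the four sub-blocks comprising $S_{\nu1}$ (taking the values $3,4,5,6$ on sub-blocks of sizes $1,10,14,17$) and on each of the two sub-blocks comprising $S_{\nu2}$ (values $5,6$ on sub-blocks of sizes $6,5$), the block piece sums to $\tfrac{297}{8192}+\tfrac{87}{8192}=\tfrac{3}{64}$. For the finite piece one evaluates six explicit terms using $L_{\delta}(2)=L_{\delta}(3)=4$ and $L_{\delta}(4)=L_{\delta}(5)=L_{\delta}(6)=L_{\delta}(7)=5$: the shortening at $a=2$ changes the Kraft sum by $+\tfrac{1}{16}$ and the five lengthenings change it by $-\tfrac{1}{32}-\tfrac{1}{64}-\tfrac{1}{64}-\tfrac{3}{128}-\tfrac{3}{128}$, for a net $-\tfrac{3}{64}$. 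The two pieces cancel, which proves the identity and hence the theorem.

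The only real difficulty is the bookkeeping. One must verify that $S_{\nu1}$ and $S_{\nu2}$ have been chosen so that $\lfloor\log_{2}(1+k)\rfloor$ is genuinely constant on each advertised sub-block, that the sub-block sizes are as used, and that the total weight freed by shortening codewords over the $B_{k}$'s matches exactly the $3/64$ deficit created by lengthening the codewords at $a=3,4,5,6,7$ (net of the shortening at $a=2$). There is no conceptual obstacle beyond balancing this finite ledger; once $\sum_{a}2^{-L_{\nu}(a)}=\sum_{a}2^{-L_{\delta}(a)}\le 1$ has been established, Kraft's inequality yields a prefix code with codeword lengths $L_{\nu}(1),L_{\nu}(2),\ldots$.
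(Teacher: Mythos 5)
Your proposal is correct and follows essentially the same route as the paper: reduce the claim to Kraft's inequality, observe that $\sum_a 2^{-L_{\nu}(a)}=\sum_a 2^{-L_{\delta}(a)}$, and verify this by balancing the contributions of the modified indices ($a=2,\ldots,7$ and the blocks with $\lfloor\log_2 a\rfloor\in S_{\nu1}\cup S_{\nu2}$); your ledger of $+3/64$ against $-3/64$ is arithmetically the same computation as the paper's evaluation of both sides of its Equation (13) to $1187/4096$. The only cosmetic differences are that you invoke the prefix property of the $\delta$ code to bound its Kraft sum rather than computing it to be exactly $1$, and that you explicitly check positivity of the lengths, which the paper leaves implicit.
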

\begin{proof}
From Kraft's inequality~\cite{kraft,EIT}, this is equivalent to proving that $\sum_{a=1}^{\infty}2^{-L_{\nu}(a)}\leq 1$.
Our objective is to prove that
\begin{equation}  \label{eq99}
\sum_{a=1}^{\infty}2^{-L_{\nu}(a)}= 1
\end{equation}
holds. First, we calculate
\begin{equation}   \label{eq98}
\begin{aligned}
\sum_{a=1}^{\infty}2^{-L_{\delta}(a)} & =\sum_{a=1}^{\infty}2^{-(1+\lfloor\log_{2}a\rfloor+2\lfloor\log_{2}(1+\lfloor\log_{2}a\rfloor)\rfloor)}  \\
                                      & \overset{(b)}=\sum_{t=0}^{\infty}2^{t}\cdot2^{-(1+t+2\lfloor\log_{2}(1+t)\rfloor)} \\
                                      & =2^{-1} \sum_{t=0}^{\infty} \left(\frac{1}{4}\right)^{\lfloor\log_{2}(1+t)\rfloor}  \\
                                      & \overset{(c)}=2^{-1} \sum_{s=0}^{\infty} 2^{s} \left(\frac{1}{4}\right)^{s}  \\
                                      & =2^{-1} \sum_{s=0}^{\infty} 2^{-s}   \\
                                      & =1,
\end{aligned}
\end{equation}
where $(b)$ and $(c)$ are the results of the substitutions $t=\lfloor\log_{2}a\rfloor$ and $s=\lfloor\log_{2}(1+t)\rfloor$, respectively.
Due to~Equation \eqref{eq98} and the definition of $L_{\nu}(a)$, proving~Equation \eqref{eq99} is equivalent to proving that
\begin{equation}  \label{eq97}
\sum_{a=2}^{7}2^{-L_{\delta}(a)}+\sum_{\lfloor\log_{2}a\rfloor\in S_{\nu1}\cup S_{\nu2}}2^{-L_{\delta}(a)} = \sum_{a=2}^{7}2^{-L_{\nu}(a)}+\sum_{\lfloor\log_{2}a\rfloor\in S_{\nu1}\cup S_{\nu2}}2^{-L_{\nu}(a)}.
\end{equation}
Afterward, we obtain
\begin{equation*}
\begin{aligned}
& \quad    \sum_{a=2}^{7}2^{-L_{\delta}(a)}+\sum_{\lfloor\log_{2}a\rfloor\in S_{\nu1}\cup S_{\nu2}}2^{-L_{\delta}(a)}  \\
                             & =2\cdot2^{-4}+4\cdot2^{-5}+2^{7}\cdot2^{-14}+\sum_{t=15}^{24}2^{t}\cdot2^{-(1+t+2\lfloor\log_{2}(1+t)\rfloor)}   \\
                             & \quad +\sum_{t=31}^{50}2^{t}\cdot2^{-(1+t+2\lfloor\log_{2}(1+t)\rfloor)}+\sum_{t=63}^{84}2^{t}\cdot2^{-(1+t+2\lfloor\log_{2}(1+t)\rfloor)} \\
                             & =2^{-2}+2^{-7}+\sum_{t=15}^{24}2^{-9}+\sum_{t=31}^{50}2^{-11}+\sum_{t=63}^{84}2^{-13} \\
                             & =\frac{1187}{4096},
\end{aligned}
\end{equation*}
and
\begin{equation*}
\begin{aligned}
& \quad    \sum_{a=2}^{7}2^{-L_{\nu}(a)}+\sum_{\lfloor\log_{2}a\rfloor\in S_{\nu1}\cup S_{\nu2}}2^{-L_{\nu}(a)} \\
                             & =2^{-3}+2^{-5}+2^{-5}+2^{-6}+2^{7}\cdot2^{-13}+\sum_{t=15}^{24}2^{t}\cdot2^{-(t+2\lfloor\log_{2}(1+t)\rfloor)}
                               +\sum_{t=31}^{36}2^{t}\cdot2^{-(-1+t+2\lfloor\log_{2}(1+t)\rfloor)}   \\
                             & \quad +\sum_{t=37}^{50}2^{t}\cdot2^{-(t+2\lfloor\log_{2}(1+t)\rfloor)}+\sum_{t=63}^{67}2^{t}\cdot2^{-(-1+t+2\lfloor\log_{2}(1+t)\rfloor)}
                             +\sum_{t=68}^{84}2^{t}\cdot2^{-(t+2\lfloor\log_{2}(1+t)\rfloor)} \\
                             & =2^{-3}+3\cdot2^{-5}+\sum_{t=15}^{24}2^{-8}+\sum_{t=31}^{36}2^{-9}+\sum_{t=37}^{50}2^{-10}+\sum_{t=63}^{67}2^{-11}+\sum_{t=68}^{84}2^{-12} \\
                             & =\frac{1187}{4096}.
\end{aligned}
\end{equation*}
Thus,~Equation \eqref{eq97} holds. The proof is complete.
\end{proof}
Finally, the main theorem of this paper is presented here, with its proof provided in the next section.
\begin{theorem}\label{thm5}
The $\nu$ code is a UCI and has an expansion factor $C_{\nu}=2.0386$; that is,
\begin{equation*}
\frac{A_{P}(L_{\nu})}{\max\{1,H(P)\}}\leq 2.0386
\end{equation*}
for all $P\in\mathcal{P}_\mathcal{A}$.
Thus, $C^{*}\leq 2.0386$.
\end{theorem}
{\color{red}\subsection{The specific construction of the $\nu$ code}
In this subsection, we construct the $\nu$ code using two methods.

Because the length function of the $\nu$ code is similar to that of the Elias $\delta$ code, the first construction scheme is similar to the way in which Elias constructed the $\delta$ code. Let us first construct an auxiliary $\Delta\gamma$ code.
Since the newly constructed code is obtained from a change in the Elias $\gamma$ code, it is called the $\Delta\gamma$ code.
Notably, the $\Delta\gamma$ code is encoded starting from $0$.
The encoding strategy of the $\Delta\gamma$ code is as follows.
The first 86 codewords of the $\Delta\gamma$ code are shown in Table~\ref{tab10},
and from the 87-th codeword of the encoding onward, $\Delta\gamma(a)=\gamma(a)$ for all $a\geq86$.
The last column of Table~\ref{tab10} shows the relationship between the $\gamma$ code and the $\Delta\gamma$ code, where $\gamma(32)\cap\gamma(33)$ denotes the shortening of $\gamma(32)=00000100000$ and $\gamma(33)=00000100001$ refers to the common longest prefix $0000010000$.

\begin{table*}[!htbp]
\centering
\color{red}
\caption{The first 85 codewords of the $\gamma$ code and the first 86 codewords of the $\Delta\gamma$ code}\label{tab10}
\begin{tabular}{c|c|c|c||c|c|c|c}
\hline \hline
$n$  &  $\gamma$ code  & $\Delta\gamma$ code & relationship & $n$  &  $\gamma$ code  & $\Delta\gamma$ code & relationship \\
\hline \hline
$0$   &     ----         &    1         &   $\gamma(1)$       &  $43$     & 000001 01011  & 000011110 1   &   $\gamma(30)1$  \\
$1$   &      1           &   010        &   $\gamma(2)$       &  $44$     & 000001 01100  & 000011111 0   &  $\gamma(31)0$  \\
$2$   &     01 0         &  011 0       &   $\gamma(3)$0      &  $45$     & 000001 01101  & 000011111 1   &  $\gamma(31)1$  \\
$3$   &     01 1         &  011 100     &   $\gamma(3)$100    &  $46$     & 000001 01110  & 0000010000    &  $\gamma(32)\cap\gamma(33)$  \\
$4$   &     001 00       &  011 11      &   $\gamma(3)$11     &  $47$     & 000001 01111  & 0000010001    &  $\gamma(34)\cap\gamma(35)$  \\
$5$   &     001 01       &  00100       &   $\gamma(4)$       &  $48$     & 000001 10000  & 0000010010    &  $\gamma(36)\cap\gamma(37)$  \\
$6$   &     001 10       &  00101       &   $\gamma(5)$       &  $49$     & 000001 10001  & 0000010011    &  $\gamma(38)\cap\gamma(39)$  \\
$7$   &     001 11       &  00110       &   $\gamma(6)$       &  $50$     & 000001 10010  & 0000010100    &  $\gamma(40)\cap\gamma(41)$  \\
$8$   &     0001 000     &  011 101     &   $\gamma(3)$101    &  $51$     & 000001 10011  & 0000010101    &  $\gamma(42)\cap\gamma(43)$  \\
$9$   &     0001 001     &  00111 00    &   $\gamma(7)$00     &  $52$     & 000001 10100  & 00000101100   &  $\gamma(44)$  \\
$10$  &     0001 010     &  00111 01    &   $\gamma(7)$01     &  $53$     & 000001 10101  & 00000101101   &  $\gamma(45)$  \\
$11$   &    0001 011     &  00111 10    &   $\gamma(7)$10     &  $54$     & 000001 10110  & 00000101110   &  $\gamma(46)$  \\
$12$   &    0001 100     &  00111 11    &   $\gamma(7)$11     &  $55$     & 000001 10111  & 00000101111   &  $\gamma(47)$  \\
$13$   &    0001 101     &  0001000     &   $\gamma(8)$       &  $56$     & 000001 11000  & 00000110000   &  $\gamma(48)$  \\
$14$   &    0001 110     &  0001001     &   $\gamma(9)$       &  $57$     & 000001 11001  & 00000110001   &  $\gamma(49)$  \\
$15$   &    0001 111     &  0001010     &   $\gamma(10)$      &  $58$     & 000001 11010  & 00000110010   &  $\gamma(50)$  \\
$16$   &    00001 0000   &  0001011 0   &   $\gamma(11)0$     &  $59$     & 000001 11011  & 00000110011   &  $\gamma(51)$  \\
$17$   &    00001 0001   &  0001011 1   &   $\gamma(11)1$     &  $60$     & 000001 11100  & 00000110100   &  $\gamma(52)$  \\
$18$   &    00001 0010   &  0001100 0   &   $\gamma(12)0$     &  $61$     & 000001 11101  & 00000110101   &  $\gamma(53)$  \\
$19$   &    00001 0011   &  0001100 1   &   $\gamma(12)1$     &  $62$     & 000001 11110  & 00000110110   &  $\gamma(54)$  \\
$20$   &    00001 0100   &  0001101 0   &   $\gamma(13)0$     &  $63$     & 000001 11111  & 00000110111   &  $\gamma(55)$  \\
$21$   &    00001 0101   &  0001101 1   &   $\gamma(13)1$     &  $64$     & 0000001 000000& 00000111000   &  $\gamma(56)$  \\
$22$   &    00001 0110   &  0001110 0   &   $\gamma(14)0$     &  $65$     & 0000001 000001& 00000111001   &  $\gamma(57)$  \\
$23$   &    00001 0111   &  0001110 1   &   $\gamma(14)1$     &  $66$     & 0000001 000010& 00000111010   &  $\gamma(58)$  \\
$24$   &    00001 1000   &  0001111 0   &   $\gamma(15)0$     &  $67$     & 0000001 000011& 00000111011   &  $\gamma(59)$  \\
$25$   &    00001 1001   &  0001111 1   &   $\gamma(15)1$     &  $68$     & 0000001 000100& 00000111100   &  $\gamma(60)$  \\
$26$   &    00001 1010   &  000010000   &   $\gamma(16)$      &  $69$     & 0000001 000101& 00000111101 0 &  $\gamma(61)0$  \\
$27$   &    00001 1011   &  000010001   &   $\gamma(17)$      &  $70$     & 0000001 000110& 00000111101 1 &  $\gamma(61)1$  \\
$28$   &    00001 1100   &  000010010   &   $\gamma(18)$      &  $71$     & 0000001 000111& 00000111110 0 &  $\gamma(62)0$  \\
$29$   &    00001 1101   &  000010011   &   $\gamma(19)$      &  $72$     & 0000001 001000& 00000111110 1 &  $\gamma(62)1$  \\
$30$   &    00001 1110   &  000010100   &   $\gamma(20)$      &  $73$     & 0000001 001001& 00000111111 0 &  $\gamma(63)0$  \\
$31$   &    00001 1111   &  000010101   &   $\gamma(21)$      &  $74$     & 0000001 001010& 00000111111 1 &  $\gamma(63)1$  \\
$32$   &    000001 00000 &  000010110   &   $\gamma(22)$      &  $75$     & 0000001 001011& 000000100000  &  $\gamma(64)\cap\gamma(65)$  \\
$33$   &    000001 00001 &  000010111   &   $\gamma(23)$      &  $76$     & 0000001 001100& 000000100001  & $\gamma(66)\cap\gamma(67)$  \\
$34$   &    000001 00010 &  000011000   &   $\gamma(24)$      &  $77$     & 0000001 001101& 000000100010  &  $\gamma(68)\cap\gamma(69)$  \\
$35$   &    000001 00011 &  000011001   &   $\gamma(25)$      &  $78$     & 0000001 001110& 000000100011  &  $\gamma(70)\cap\gamma(71)$  \\
$36$   &    000001 00100 &  000011010   &   $\gamma(26)$      &  $79$     & 0000001 001111& 000000100100  &  $\gamma(72)\cap\gamma(73)$  \\
$37$   &    000001 00101 &  000011011   &   $\gamma(27)$      &  $80$     & 0000001 010000& 000000100101  &  $\gamma(74)\cap\gamma(75)$  \\
$38$   &    000001 00110 &  000011100 0 &   $\gamma(28)0$     &  $81$     & 0000001 010001& 000000100110  &  $\gamma(76)\cap\gamma(77)$  \\
$39$   &    000001 00111 &  000011100 1 &   $\gamma(28)1$     &  $82$     & 0000001 010010& 000000100111  &  $\gamma(78)\cap\gamma(79)$  \\
$40$   &    000001 01000 &  000011101 0 &   $\gamma(29)0$     &  $83$     & 0000001 010011& 000000101000  &  $\gamma(80)\cap\gamma(81)$  \\
$41$   &    000001 01001 &  000011101 1 &   $\gamma(29)1$     &  $84$     & 0000001 010100& 000000101001  &  $\gamma(82)\cap\gamma(83)$  \\
$42$   &    000001 01010 &  000011110 0 &   $\gamma(30)0$     &  $85$     & 0000001 010101& 000000101010  &  $\gamma(84)\cap\gamma(85)$  \\
\hline \hline
\end{tabular}
\end{table*}

$\nu$ code~\cite{Elias75}: $\mathcal{A}\rightarrow \{0,1\}^{*}$ can be represented as
\begin{equation} \label{eq47}
\nu(a)\triangleq \left\{\begin{array}{llllllll}
1,                  &\text{if }   a=1\text{ ,}\\
010,         &\text{if }   a=2\text{ ,}   \\
01100,           &\text{if } a=3\text{ ,}\\
011010,                  &\text{if }   a=4\text{ ,}\\
011011,         &\text{if }   a=5\text{ ,}   \\
0111000,           &\text{if } a=6\text{ ,}\\
0111001,          &\text{if } a=7\text{ ,}\\
\Delta\gamma(|\beta(a)|)[\beta(a)],     &\text{if } a\geq8\text{ .}\\
\end{array}\right.
\end{equation}
From the construction of the $\nu$ code, it can be seen that the auxiliary $\Delta\gamma$ code is only activated when $a\geq8$,
which indicates that the $\Delta\gamma$ code starts affecting the construction process of the $\nu$ code from the $5=1+(\log_{2}8+1)$-th codeword $01111$.
It can be verified that the constructed $\nu$ code satisfies Equation~\eqref{eq43}.
From the construction schemes of the $\Delta\gamma$ code and $\nu$ code, Equation~\eqref{eq43} is clearly satisfied when $a\leq7$ or $a\geq2^{85}$.
Therefore, it is only necessary to verify these $82$ values when $a=2^t$, where $t\in\{3,4,\ldots,84\}$.
For example, when $a=2^{33}$, we obtain
\[
 L_\nu(2^{33})=1+33+2\lfloor\log_{2}(1+33)\rfloor-2=42
\]
due to Equation~\eqref{eq43}, and $\nu(2^{33})=000011000\underbrace{00\cdots 0}_{33}$ based on the construction of the $\nu$ code.
This verification is not obvious. Therefore, we provide a second construction process.

The second construction process originates from alphabetic codes~\cite{91NN,91Yeung,2024code}.
Nakatsu~\cite{91NN} proposed a scheme for constructing alphabetic codes, which is described as follows.\footnote{Although the scheme developed in~\cite{91NN} targets a finite alphabet, any two codewords constructed by the scheme are not prefixes of each other; thus, extending this alphabet to a countably infinite alphabet is feasible.}
Let $l_n$ denote the length of the $n$-th codeword.
Let $\alpha_n\triangleq\min\{l_{n-1},l_n\}$ for all $2\leq n\in\mathbb{N}$.
We define $trunc(x,n)\triangleq\frac{\lfloor2^{n}x\rfloor}{2^n}$, which denotes the first $n$ bits after the decimal point of the binary representation of the real number $x<1$.
The coding scheme takes the first $l_n$ bits after the decimal point of the binary representation of $sum(n)$ as the $n$-th codeword for all $n\in\mathbb{N}$, where
\begin{equation*}
sum(n)\triangleq \left\{\begin{array}{ll}
0,                  &\text{if }   n=1\text{ ,}\\
trunc\big(sum(n-1),\alpha_n\big)+\Big(\dfrac{1}{2}\Big)^{\alpha_n},       &\text{if } n\geq2\text{ .}\\
\end{array}\right.
\end{equation*}

In this paper, the codeword length satisfies $l_{n-1}\leq l_n$ for all $2\leq n\in\mathbb{N}$; therefore, we can obtain that $\alpha_n=l_{n-1}$ for all $2\leq n\in\mathbb{N}$ and
\begin{equation*}
sum(n)= \left\{\begin{array}{ll}
0,                  &\text{if }   n=1\text{ ,}\\
\sum_{i=1}^{n-1}\Big(\dfrac{1}{2}\Big)^{l_i},       &\text{if } n\geq2\text{ .}\\
\end{array}\right.
\end{equation*}
Taking the $\nu$ code as an example, Equation~\eqref{eq43} needs to be satisfied.
Therefore, we have that $l_1=1$, $l_2=3$, $l_3=5$, $l_4=6$, $l_5=6$, $l_6=7$, $l_7=8$, $l_8=8$.
By the definition of $sum(n)$, we obtain $sum(1)=0$, $sum(2)=\frac{1}{2}=(0.1)_2$,
$sum(3)=\frac{5}{8}=(0.101)_2$, $sum(4)=\frac{21}{32}=(0.10101)_2$, $sum(5)=\frac{43}{64}=(0.101011)_2$, $sum(6)=\frac{11}{16}=(0.1011)_2$,
$sum(7)=\frac{89}{128}=(0.1011001)_2$ and $sum(8)=\frac{45}{64}=(0.101101)_2$.
We take the first $l_n$ bits after the decimal point of the binary representation of $sum(n)$ as the $n$-th codeword;
thus, we obtain $\nu(1)=0$, $\nu(2)=100$, $\nu(3)=10100$, $\nu(4)=101010$, $\nu(5)=101011$, $\nu(6)=1011000$, $\nu(7)=1011001$ and $\nu(8)=10110100$.

From the first $8$ codewords, it can be seen that the codes obtained by the two construction schemes are not the same, but both of them satisfy Equation~\eqref{eq43}.
The advantage of the first construction scheme is its high coding efficiency; for a large integer $n$, $\nu(n)$ can be obtained relatively quickly.
The advantage of the second construction scheme is that it can be easily verified to satisfy Equation~\eqref{eq43}.}
\section{The Complete Proof of Theorem~\ref{thm5}}\label{proof}
The complete proof of Theorem~\ref{thm5} is given in this section.
We divide the proof into the following two subsections.
We first need to prove the codeword length upper bound inequalities for the $\nu$ code.

\subsection{The codeword length upper bound inequalities for the $\nu$ code}
We provide the codeword length upper bound inequalities for the $\nu$ code, as shown in the following lemma.
Notably, in our research approach, to carry out a more precise inequality scaling process for the codeword length, $F_2$ in Inequality~\eqref{eq42} is a piecewise function.
Therefore, the following lemma is divided into several cases based on the value of $P(1)$.
\begin{lemma}\label{lemma3}
Suppose that $P\in\mathcal{P}_\mathcal{A}$ is any given probability distribution.
Let
\begin{equation} \label{eq45}
g_{(c_1,c_2)}\left(a,P(1)\right)\triangleq\frac{c_1}{c_2-\log_{2}\big(1-P(1)\big)}\Big(\log_{2}a-\log_{2}\big(1-P(1)\big)\Big).
\end{equation}
\begin{enumerate}
\item When $0<P(1)\leq 1-3^{6}\cdot2^{-10}\approx 0.28809$, $L_{\nu}(a)\leq g_{(5,\log_{2}3)}(a,P(1))$ for all $3\leq a\in\mathcal{A}$;
\item when $1-3^{6}\cdot2^{-10}<P(1)\leq 0.5$, $L_{\nu}(a)\leq g_{(6,2)}(a,P(1))$ for all $3\leq a\in\mathcal{A}$;
\item when $0.5< P(1)\leq1-2^{-\frac{19}{7}}\approx0.84762$,  $L_{\nu}(a)\leq g_{(8,3)}(a,P(1))$ for all $3\leq a\in\mathcal{A}$;
\item when $1-2^{-\frac{19}{7}}< P(1) \leq 1-2^{-\frac{41}{8}}\approx0.97134$,  $L_{\nu}(a)\leq g_{(15,8)}(a,P(1))$ for all $3\leq a\in\mathcal{A}$;
\item when $1-2^{-\frac{41}{8}}<P(1)\leq1-2^{-\frac{65}{11}}\approx0.98336$, $L_{\nu}(a)\leq g_{(23,15)}(a,P(1))$ for all $3\leq a\in\mathcal{A}$;
\item when $1-2^{-\frac{65}{11}}<P(1)\leq 1-2^{-\frac{83}{13}}\approx 0.98803$, $L_{\nu}(a)\leq g_{(34,25)}(a,P(1))$ for all $3\leq a\in\mathcal{A}$;
\item when $1-2^{-\frac{83}{13}}<P(1)\leq 1-2^{-\frac{103}{15}}\approx 0.99143$,  $L_{\nu}(a)\leq g_{(47,37)}(a,P(1))$ for all $3\leq a\in\mathcal{A}$;
\item when $1-2^{-\frac{103}{15}}< P(1)\leq1-2^{-\frac{68}{9}}\approx0.99468$,  $L_{\nu}(a)\leq g_{(62,51)}(a,P(1))$ for all $3\leq a\in\mathcal{A}$;
\item when $1-2^{-\frac{68}{9}}< P(1) \leq 1-2^{-\frac{94}{11}}\approx0.99732$,  $L_{\nu}(a)\leq g_{(98,85)}(a,P(1))$ for all $3\leq a\in\mathcal{A}$; and
\item when $1-2^{-\frac{94}{11}}<P(1)\leq1-2^{-\frac{833}{65}}\approx0.99986$, $L_{\nu}(a)\leq g_{(142,127)}(a,P(1))$ for all $3\leq a\in\mathcal{A}$.
\item The length function
\[
L_{\nu}(a)\leq \frac{65}{64}\left(\frac{833}{65}+\log_{2}a\right)=\frac{833}{64}+\frac{65}{64}\log_{2}a
\]
for all $3\leq a\in\mathcal{A}$.
\end{enumerate}
\end{lemma}
\begin{proof}
\textcolor{blue}{We first outline the proof strategy.
According to the construction of Equation~\eqref{eq47}, we proceed by considering two cases.
When $3\leq a\leq7$, we directly examine the monotonicity of $\frac{c_1}{c_2-\log_{2}\big(1-P(1)\big)}\Big(\log_{2}a-\log_{2}\big(1-P(1)\big)\Big) $ with respect to $P(1)$, thereby comparing the magnitudes of $\frac{c_1}{c_2-\log_{2}\big(1-P(1)\big)}\Big(\log_{2}a-\log_{2}\big(1-P(1)\big)\Big) $ and $L_{\nu}(a)$ within the corresponding interval.
When $8\leq a\in\mathcal{A}$, $L_{\nu}(a)\leq g_{(c_1,c_2)}(a,P(1))$ is equivalent to
\begin{equation}~\label{eq96}
  1+\lfloor\log_{2}a\rfloor+2\lfloor\log_{2}(1+\lfloor\log_{2}a\rfloor)\rfloor+\Delta(a) \leq \frac{c_1}{c_2-\log_{2}\big(1-P(1)\big)}\Big(\log_{2}a-\log_{2}\big(1-P(1)\big)\Big).
\end{equation}
Let $t\triangleq \lfloor\log_{2}a\rfloor$ and $x\triangleq -\log_{2}\big(1-P(1)\big)$.
To prove~Equation \eqref{eq96}, we only need to show that
\begin{equation*}
\begin{aligned}
          & 1+t+2\lfloor\log_{2}(1+t)\rfloor+\widetilde{\Delta}(t) \leq \frac{c_1}{c_2+x}(t+x)   \\
 \iff & \left(\frac{c_1}{c_2+x}-1\right)t+\frac{c_1x}{c_2+x}-1-\widetilde{\Delta}(t)\geq 2\lfloor\log_{2}(1+t)\rfloor ,
\end{aligned}
\end{equation*}
where $\widetilde{\Delta}(\lfloor\log_{2}a\rfloor)\triangleq \Delta(a)$\footnote{According to the definition of $\Delta(\cdot)$, if $8\leq a_1\neq a_2$ and $\lfloor\log_{2}a_1\rfloor=\lfloor\log_{2}a_2\rfloor$, then $\Delta(a_1)=\Delta(a_2)$ still holds. Therefore, the definition of $\widetilde{\Delta}(\cdot)$ is well-defined, and it can also be directly defined as $\widetilde{\Delta}(t)\triangleq\Delta(2^t)$.} for all $8\leq a\in\mathcal{A}$.
Based on the definition of $\Delta(\cdot)$, $\widetilde{\Delta}(t)=0,-1,-2$ for all $3\leq t\in\mathcal{A}$.
Let
\begin{equation} \label{eq46}
  h_{(c_1,c_2)}\left(t,x\right)\triangleq\left(\frac{c_1}{c_2+x}-1\right)t+\frac{c_1x}{c_2+x}-1.
\end{equation}
Thus, to show that $L_{\nu}(a)\leq g_{(c_1,c_2)}(a,P(1))$ for all $8\leq a\in\mathcal{A}$ and $d_1 < P(1)\leq d_2 $,
it is only necessary to show that
\begin{equation}  \label{eq95}
h_{(c_1,c_2)}\left(t,x\right)-\widetilde{\Delta}(t) \geq   2\lfloor\log_{2}(1+t)\rfloor
\end{equation}
for all $3\leq t\in\mathbb{N}$ and $-\log_2(1-d_1) <  x \leq - \log_2(1-d_2)$.}
\begin{enumerate}
\item[1)] To prove that $L_{\nu}(a)\leq g_{(5,\log_{2}3)}(a,P(1))$ for all $3\leq a\in\mathcal{A}$, consider the following four cases.
\begin{enumerate}
\item When $a=3$, we obtain
    \[
      g_{(5,\log_{2}3)}(3,P(1))=\frac{5}{\log_{2}3-\log_{2}[1-P(1)]}\big\{\log_{2}3-\log_{2}[1-P(1)]\big\}=L_{\nu}(3).
    \]
\item When $a=4,5$, since $L_{\nu}(4)=L_{\nu}(5)=6$ and $g_{(5,\log_{2}3)}(a,P(1))$ is a strictly increasing sequence with respect to $a$, it is only necessary to show that $g_{(5,\log_{2}3)}(4,P(1)) \geq L_{\nu}(4)=6$. Owing to
    \[
     g_{(5,\log_{2}3)}(4,P(1))=5+\frac{10-5\log_{2}3}{\log_{2}3-\log_{2}(1-P(1))},
    \]
    $g_{(5,\log_{2}3)}(4,P(1))$ is strictly monotonically decreasing on the interval $(0,1-3^{6}\cdot2^{-10}]$ with respect to $P(1)$. Thus,
    \[
     g_{(5,\log_{2}3)}(4,P(1))\geq g_{(5,\log_{2}3)}(4,1-3^{6}\cdot2^{-10})=6=L_{\nu}(4).
    \]
\item When $a=6,7$, since $L_{\nu}(6)=L_{\nu}(7)=7$, similarly, we only need to show that $g_{(5,\log_{2}3)}(6,P(1))\geq L_{\nu}(6)=7$. Due to
    \[
     g_{(5,\log_{2}3)}(6,P(1))=5+\frac{5}{\log_{2}3-\log_{2}(1-P(1))},
    \]
    $g_{(5,\log_{2}3)}(6,P(1))$ is strictly monotonically decreasing on the interval $(0,1-3^{6}\cdot2^{-10}]$ with respect to $P(1)$. Thus,
    \[
     g_{(5,\log_{2}3)}(6,P(1))\geq g_{(5,\log_{2}3)}(6,1-3^{6}\cdot2^{-10})>7=L_{\nu}(6).
    \]
\item When $a\geq8$, from~Equation \eqref{eq95}, we only need to show that
  \begin{equation}\label{eq48}
      h_{(5,\log_{2}3)}\left(t,x\right)-\widetilde{\Delta}(t)  \geq   2\lfloor\log_{2}(1+t)\rfloor
  \end{equation}
    for all $3\leq t\in\mathbb{N}$ and $0 < x \leq 10-6\log_{2}3$. When $t=3$, the left-hand side of Equation~\eqref{eq48} is
  \begin{equation*}
\begin{aligned}
     h_{(5,\log_{2}3)}\left(3,x\right) & =  1+ \frac{15-5\log_{2}3}{x+\log_{2}3}   \\
       &\geq 1+ \frac{15-5\log_{2}3}{10-6\log_{2}3+\log_{2}3} \\
       & > 4 = 2\lfloor\log_{2}(1+3)\rfloor   .
\end{aligned}
\end{equation*}
Thereafter, each increase of 2 on the right-hand side of Equation \eqref{eq48} increases $t$ by at least $7-3=4$; thus, the left-hand side of Equation \eqref{eq48} increases by at least
\[
  \left(\frac{5}{\log_{2}3+x}-1\right)\times 4\geq \left(\frac{5}{\log_{2}3+10-6\log_{2}3}-1\right)\times 4 >4.
\]
Therefore, Equation \eqref{eq48} holds when $3\leq t\in\mathbb{N}$ and $0 < x \leq 10-6\log_{2}3$.
 \end{enumerate}
\item[2)-10)] \textcolor{blue}{The proof strategy employed for these cases is exactly the same as 1).
To avoid repetition and improve the readability of this paper, we move these proofs to Appendix~\ref{appexdix:lemma 6}.}
\item[11)]  When $3\leq a\leq255$, we obtain
\[
L_{\nu}(a)\leq L_{\nu}(255)=13 < \frac{833}{64}+\frac{65}{64}\log_{2}a.
\]
When $a\geq256$, let $t\triangleq \lfloor\log_{2}a\rfloor\geq8$; then,  $L_{\nu}(a)=L_{\delta}(a)+\Delta(a)\leq L_{\delta}(a)$.
Therefore, we only need to show that
\begin{align}\label{eq84}
          & 1+t+2\lfloor\log_{2}(1+t)\rfloor \leq \frac{833}{64}+\frac{65}{64}t  \nonumber   \\
 \iff & \frac{t}{64}+\frac{769}{64}\geq 2\lfloor\log_{2}(1+t)\rfloor .
\end{align}
We consider the following four cases.
\begin{enumerate}
\item When $8\leq t \leq 14$, the left-hand side of~Equation \eqref{eq84} is
\begin{equation*}
     \frac{t}{64}+\frac{769}{64} \geq  \frac{8}{64}+\frac{769}{64}>12> 6=2\lfloor\log_{2}(1+t)\rfloor.
\end{equation*}
\item When $15\leq t \leq 30$, the left-hand side of~Equation \eqref{eq84} is
\begin{equation*}
     \frac{t}{64}+\frac{769}{64} \geq  \frac{15}{64}+\frac{769}{64}=12.25> 8=2\lfloor\log_{2}(1+t)\rfloor.
\end{equation*}
\item When $31\leq t \leq 62$, the left-hand side of~Equation \eqref{eq84} is
\begin{equation*}
     \frac{t}{64}+\frac{769}{64} \geq  \frac{31}{64}+\frac{769}{64}=12.5> 10=2\lfloor\log_{2}(1+t)\rfloor.
\end{equation*}
\item When $63\leq t \leq 126$, the left-hand side of~Equation \eqref{eq84} is
\begin{equation*}
     \frac{t}{64}+\frac{769}{64} \geq  \frac{63}{64}+\frac{769}{64}=13> 12=2\lfloor\log_{2}(1+t)\rfloor.
\end{equation*}
\item We prove that Equation \eqref{eq84} holds for all $t\geq127$.
When $t=127$, the left-hand side of~Equation \eqref{eq84} is
\begin{equation*}
     \frac{t}{64}+\frac{769}{64}  = \frac{127}{64}+\frac{769}{64}  = 14 = 2\lfloor\log_{2}(1+t)\rfloor.
\end{equation*}
Thereafter, each increase of 2 on the right-hand side of Equation \eqref{eq84} increases $t$ by at least $255-127=128$,
and thus, the left-hand side of Equation \eqref{eq84} increases by at least $\frac{1}{64}\times 128 = 2$.
Therefore, Equation \eqref{eq84} holds when $127\leq t\in\mathbb{N}$.
\end{enumerate}
\end{enumerate}
\end{proof}
{\color{blue}\begin{remark}\label{remark2}
Lemma~\ref{lemma3} is divided into $11$ cases on the basis of the values of $P(1)$, mainly because of the joint effect of its codeword construction process and Inequality~\eqref{eq42} on the research approach.
To achieve the extension factor $C_\nu=2.0386$ of the $\nu$ code in Theorem~\ref{thm5}, the above cases cannot be merged.
However, suppose that the requirements concerning the results can be relaxed, for example, by retaining the first four cases of Lemma~\ref{lemma3} and proving that
the inequality
\[
L_{\nu}(a)\leq \frac{8}{7}\left(\frac{41}{8}+\log_{2}a\right)
\]
holds for all $3\leq a\in\mathcal{A}$.
Then, the codeword length upper bound inequality needs to be divided into only $5$ cases,
and it can subsequently be proven that the $\nu$ code has an expansion factor $C_{\nu}=\frac{15}{7}\approx2.1429$.
\end{remark}}
\subsection{The proof of Theorem~\ref{thm5}}
In this subsection, we prove the main theorem of this paper, namely, Theorem~\ref{thm5}.
To better illustrate the proof, we define some notations.
\begin{equation}\label{eq52}
\begin{aligned}
         D_{(c_1,c_2)}(P(1)) & \triangleq  \frac{c_1}{c_2-\log_{2}\big(1-P(1)\big)},  \\
         J_{(c_1,c_2)}(P(1),L) & \triangleq D_{(c_1,c_2)}(P(1))\log_{2}C_L+P(1)\Big[1+D_{(c_1,c_2)}(P(1))\big(\log_{2}P(1)-\log_{2}C_{L}\big)\Big],    \\
         R_{(c_1,c_2)}(P(1),L) &\triangleq 3+D_{(c_1,c_2)}(P(1))\left(\log_{2}\big(1-P(1)\big)+\sum_{n=2}^{L-1}\log_{2}C_{n} -(L-1)\log_{2}C_{L}\right), \\                  Q_{(c_1,c_2)}(P(1),P(2),L) & \triangleq J_{(c_1,c_2)}(P(1),L)+P(2)R_{(c_1,c_2)}(P(1),L).     \\
\end{aligned}
\end{equation}

The most important theorem of this paper is proven as follows.
\begin{theorem}(Theorem~\ref{thm5} Restated)\label{thm7}
The $\nu$ code is a UCI and has an expansion factor $C_{\nu}=2.0386$; that is,
\begin{equation*}
\frac{A_{P}(L_{\nu})}{\max\{1,H(P)\}}\leq 2.0386
\end{equation*}
for all $P\in\mathcal{P}_\mathcal{A}$.
Thus, $C^{*}\leq 2.0386$.
\end{theorem}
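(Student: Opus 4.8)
The plan is to bound the average codeword length $A_P(L_\nu)$ above by $2.0386\max\{1,H(P)\}$ for every decreasing $P\in\mathcal{P}_\mathcal{A}$, splitting into two regimes according to the size of $H(P)$, or equivalently according to the size of $P(1)$. The natural case split is: a ``small-entropy'' regime where $P(1)$ is large (so that $\max\{1,H(P)\}=1$ or is small) and a ``large-entropy'' regime where $P(1)$ is bounded away from $1$. In both regimes the strategy is the same: write
\[
A_P(L_\nu)=\sum_{n=1}^{\infty}P(n)L_\nu(n)=P(1)L_\nu(1)+P(2)L_\nu(2)+\sum_{n=3}^{\infty}P(n)L_\nu(n),
\]
use Lemma~\ref{lemma3} to replace $L_\nu(n)$ for $n\ge 3$ by the affine-in-$\log_2 n$ majorant $g_{(c_1,c_2)}(n,P(1))$ appropriate to the interval containing $P(1)$, and then apply Lemma~\ref{lemma6} to convert the resulting sum $\sum_{n\ge3}P(n)[\log_2 n-\log_2(1-P(1))]$ into an expression controlled by $H(P)$, $P(1)$, and $P(2)$. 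Concretely, $\sum_{n\ge3}P(n)g_{(c_1,c_2)}(n,P(1))=D_{(c_1,c_2)}(P(1))\sum_{n\ge3}P(n)[\log_2 n-\log_2(1-P(1))]$, and feeding Lemma~\ref{lemma6} into this yields a bound of the shape
\[
A_P(L_\nu)\le D_{(c_1,c_2)}(P(1))\,H(P)+Q_{(c_1,c_2)}\big(P(1),P(2),L\big)+\big(\text{contributions from }L_\nu(1),L_\nu(2)\big),
\]
for a free parameter $L\ge 2$ that I get to optimize. The $Q$, $J$, $R$ notation introduced just before the theorem statement is precisely packaging this computation.

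The next step is to reduce the two-variable optimization over $(P(1),P(2))$ to something tractable. Since $R_{(c_1,c_2)}(P(1),L)$ is the coefficient of $P(2)$, and $P$ is decreasing so that $P(2)\le P(1)$ and also $P(2)\le \tfrac12(1-P(1))$, I would check the sign of $R$ on each $P(1)$-interval from Lemma~\ref{lemma3}: wherever $R\ge 0$, push $P(2)$ to its maximum in terms of $P(1)$ (or drop it if that is what the inequality needs); wherever $R\le 0$, set $P(2)=0$ or its minimum. This collapses everything to a one-variable estimate in $x\triangleq-\log_2(1-P(1))$ (or in $P(1)$ directly) on each interval. On each such interval the constants $(c_1,c_2)$ and the optimal $L$ are fixed, $D_{(c_1,c_2)}(P(1))$ is monotone in $P(1)$, and one checks $D_{(c_1,c_2)}(P(1))\le 2.0386$ at the relevant endpoint together with the claim that the leftover additive terms $Q+$(boundary terms from $n=1,2$) stay below $2.0386\cdot\max\{1,H(P)\}$ minus the $H(P)$-proportional part. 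For the last interval of Lemma~\ref{lemma3}, where $P(1)$ is essentially $1$ and $H(P)$ may be below $1$, one instead uses the global affine bound $L_\nu(a)\le \tfrac{833}{64}+\tfrac{65}{64}\log_2 a$ together with $\max\{1,H(P)\}=1$ and a direct estimate of $\sum P(n)\log_2 n$ against $H(P)$ via Lemma~\ref{lemma6} with a large $L$; the slope $\tfrac{65}{64}$ and the whole design of $S_{\nu1},S_{\nu2}$ are chosen exactly so that this worst corner gives $2.0386$ rather than $2.5$.

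The main obstacle is the verification that the chosen interval partition of $P(1)$ (the ten thresholds in Lemma~\ref{lemma3}), the chosen constants $(c_1,c_2)$ on each piece, and the chosen value of $L$ on each piece actually fit together so that the supremum over all of $\mathcal{P}_\mathcal{A}$ of the resulting one-variable bound equals $2.0386$ and not something larger — i.e., that no interior point or interface point of any interval produces a bump above $2.0386$. This is where the ``near-optimal'' constant comes from, and it is essentially a finite but delicate collection of elementary monotonicity and endpoint checks; the art is that the endpoints of the $P(1)$-intervals in Lemma~\ref{lemma3} were reverse-engineered so that $D_{(c_1,c_2)}$ is continuous across them and each piece contributes the same ceiling. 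I would present the argument interval by interval, in each case (i) invoking the matching clause of Lemma~\ref{lemma3}, (ii) applying Lemma~\ref{lemma6} with the stated $L$, (iii) handling $P(2)$ by its sign in $R$, and (iv) reducing to a monotone one-variable inequality whose extremal value is $\le 2.0386$; then the small-entropy corner is dispatched separately as above. Finally, $C^*\le C_\nu^*\le 2.0386$ follows immediately from \eqref{eq7} once the expansion-factor bound is established.
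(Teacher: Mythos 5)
Your proposal follows essentially the same route as the paper's proof of Theorem~\ref{thm7}: decompose $A_{P}(L_{\nu})=P(1)+3P(2)+\sum_{n\ge3}P(n)L_{\nu}(n)$, majorize $L_{\nu}(n)$ for $n\ge3$ by $g_{(c_1,c_2)}(n,P(1))$ via Lemma~\ref{lemma3}, convert $\sum_{n\ge3}P(n)\big[\log_2 n-\log_2(1-P(1))\big]$ with Lemma~\ref{lemma6} for a tunable $L$, eliminate $P(2)$ according to the sign of $R_{(c_1,c_2)}(P(1),L)$, and finish with interval-by-interval one-variable monotonicity and endpoint checks (the paper uses thirteen cases), treating the corner $P(1)\to1$ separately with the global affine bound of clause 11) of Lemma~\ref{lemma3}. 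One correction: the side constraint $P(2)\le\tfrac12\big(1-P(1)\big)$ you invoke is false for decreasing distributions (e.g.\ $P=(0.6,0.4,0,\dots)$), and relying on it when $R\ge0$ would understate the supremum; the bounds actually available, and the ones the paper uses, are $P(2)\le P(1)$ and $P(2)\le 1-P(1)$.
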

\begin{proof}
{\color{blue}According to Lemma~\ref{lemma3} and Lemma~\ref{lemma6}, the average codeword length of the $\nu$ code is\textcolor{red}{
\begin{equation*}
\begin{aligned}
  A_{P}(L_{\nu}) &  =  P(1)+3P(2)+ \sum_{a=3}^{\infty}P(a)L_{\nu}(a)                             \\
  &\leq P(1)+3P(2)+D_{(c_1,c_2)}(P(1))\sum_{a=3}^{\infty}P(a)\Big[\log_{2}a-\log_{2}\big(1-P(1)\big)\Big]   \\
 &\leq P(1)+3P(2)+D_{(c_1,c_2)}(P(1))\Big[H(P)+\log_{2}C_{L}+P(1)\big(\log_{2}P(1)-\log_{2}C_{L}\big)   \\
            &\quad + P(2)\Big(\log_{2}\big(1-P(1)\big)+\sum_{n=2}^{L-1}\log_{2}C_{n} -(L-1)\log_{2}C_{L}\Big) \Big ]\\
      &=D_{(c_1,c_2)}(P(1)) H(P)+D_{(c_1,c_2)}(P(1))\log_{2}C_L+P(1)\Big[1\!+\!D_{(c_1,c_2)}(P(1))\big(\log_{2}P(1)\!-\!\log_{2}C_{L}\big)\Big]   \\
            &\quad + P(2)\Big[ 3+D_{(c_1,c_2)}(P(1))\Big(\log_{2}\big(1-P(1)\big)+\sum_{n=2}^{L-1}\log_{2}C_{n}-(L-1)\log_{2}C_{L}\Big) \Big ]\\
                          & = Q_{(c_1,c_2)}(P(1),P(2),L)+D_{(c_1,c_2)}(P(1)) H(P).   \\
\end{aligned}
\end{equation*}}
Based on Equation~\eqref{eq22},
we obtain
\begin{equation} \label{eq82}
\begin{aligned}
  \frac{ A_{P}(L_{\nu})}{\max\{1,H(P)\}} & \leq \frac{ Q_{(c_1,c_2)}(P(1),P(2),L)+D_{(c_1,c_2)}(P(1)) H(P)}{\max\{1,H(P)\}}  \\
                                         & \leq Q_{(c_1,c_2)}(P(1),P(2),L)+D_{(c_1,c_2)}(P(1)).  \\
\end{aligned}
\end{equation}
Next, we analyse the strategy for further bounding the expression $Q_{(c_1,c_2)}(P(1),P(2),L)+D_{(c_1,c_2)}(P(1))$.
In expression $Q_{(c_1,c_2)}(P(1),P(2),L)+D_{(c_1,c_2)}(P(1))$, there are only two unknowns: $P(1)$ and $P(2)$. We choose to address $P(2)$.
Since
\[
 Q_{(c_1,c_2)}(P(1),P(2),L)+D_{(c_1,c_2)}(P(1))=D_{(c_1,c_2)}(P(1))+J_{(c_1,c_2)}(P(1),L)+P(2)R_{(c_1,c_2)}(P(1),L),
\]
the treatment of $P(2)$ should account for the positivity or negativity of $R_{(c_1,c_2)}(P(1),L)$.
If $R_{(c_1,c_2)}(P(1),L)<0$, then $P(2)R_{(c_1,c_2)}(P(1),L)<0$.
If $R_{(c_1,c_2)}(P(1),L)\geq0$, then
$$P(2)R_{(c_1,c_2)}(P(1),L)\leq\min\big\{P(1),1-P(1)\big\}R_{(c_1,c_2)}(P(1),L).$$
In summary, the strategy can be represented as follows:
\begin{equation*}
Q_{(c_1,c_2)}(P(1),P(2),L)\leq \left\{\begin{array}{ll}
Q_{(c_1,c_2)}(P(1),0,L),                  &\text{if } R_{(c_1,c_2)}(P(1),L)<0  \text{ ,}\\
Q_{(c_1,c_2)}(P(1),\min\{P(1),1-P(1)\},L),       &\text{if } R_{(c_1,c_2)}(P(1),L)\geq0\text{ .}\\
\end{array}\right.
\end{equation*}
After handling $P(2)$, only $P(1)$ remains as an unknown, and it can be directly addressed through differentiation.}
On the basis of the numerical value of $P(1)$, we consider the following thirteen cases.
\begin{enumerate}
\item Case $0<P(1)\leq 1-3^{6}\cdot2^{-10}\approx 0.28809$:
From 1) of Lemma~\ref{lemma3}, we know that $c_1=5$, and $c_2=\log_{2}3$.
Consider $L=2$. From Equation~\eqref{eq82}, it follows that
\begin{equation*}
\begin{aligned}
\frac{ A_{P}(L_{\nu})}{\max\{1,H(P)\}}&\leq Q_{(5,\log_{2}3)}(P(1),P(2),2)+D_{(5,\log_{2}3)}(P(1))     \\
                                               &\overset{(a)}{\leq} Q_{(5,\log_{2}3)}(P(1),P(1),2)+D_{(5,\log_{2}3)}(P(1)) ,
\end{aligned}
\end{equation*}
where $(a)$ is true because
\begin{equation*}
  R_{(c_1,c_2)}(P(1),L) = 3+D_{(5,\log_{2}3)}(P(1))\left(\log_{2}\big(1-P(1)\big)-\log_{2}\frac{3}{4}\right)>0
\end{equation*}
for all $P(1)\in(0,1-3^{6}\cdot2^{-10}]$.
Let $f_{1}(x)\triangleq Q_{(5,\log_{2}3)}(x,x,2)+D_{(5,\log_{2}3)}(x)$.
By calculating the derivative, we know that $f_{1}(x)$ is decreasing and then increasing over the interval $(0,1-3^{6}\cdot2^{-10}]$.
Thus, we obtain
\begin{equation*}
\begin{aligned}
 \frac{ A_{P}(L_{\nu})}{\max\{1,H(P)\}} & \leq f_{1}(P(1))   \\
                                                 & \leq \max\{f_{1}(0),f_{1}(1-3^{6}\cdot2^{-10})\}    \\
                                                  & = f_{1}(0)<1.8454  \\
\end{aligned}
\end{equation*}
for all $P(1)\in(0,1-3^{6}\cdot2^{-10}]$.
\item Case $1-3^{6}\cdot2^{-10}<P(1)\leq 0.5$:
From 2) of Lemma~\ref{lemma3}, we know that $c_1=6$, and $c_2=2$.
Consider $L=2$. From Equation~\eqref{eq82}, it follows that
\begin{equation*}
\begin{aligned}
\frac{ A_{P}(L_{\nu})}{\max\{1,H(P)\}}&\leq Q_{(6,2)}(P(1),P(2),2)+D_{(6,2)}(P(1))     \\
                                               &\overset{(a)}{\leq} Q_{(6,2)}(P(1),P(1),2)+D_{(6,2)}(P(1)) ,
\end{aligned}
\end{equation*}
where $(a)$ is true because
\begin{equation*}
  R_{(c_1,c_2)}(P(1),L) = 3+D_{(6,2)}(P(1))\left(\log_{2}\big(1-P(1)\big)-\log_{2}\frac{3}{4}\right)>0
\end{equation*}
for all $P(1)\in(1-3^{6}\cdot2^{-10},0.5]$.
Let \textcolor{red}{$f_{2}(x)\triangleq Q_{(6,2)}(x,x,2)+D_{(6,2)}(x)$}.
By calculating the derivative, we know that \textcolor{red}{$f_{2}(x)$} is strictly increasing over the interval $(1-3^{6}\cdot2^{-10},0.5]$.
Therefore, we obtain
\begin{equation*}
\begin{aligned}
 \frac{ A_{P}(L_{\nu})}{\max\{1,H(P)\}}  \leq f_{2}(P(1))\leq f_{2}(0.5)=2
\end{aligned}
\end{equation*}
for all $P(1)\in(1-3^{6}\cdot2^{-10},0.5]$.

\item Case $0.5< P(1)\leq 1-2^{-\frac{19}{7}}\approx 0.84762$:
From 3) of Lemma~\ref{lemma3}, we know that $c_1=8$, and $c_2=3$.
If $L=2$, then
\begin{equation*}
 R_{(c_1,c_2)}(P(1),L) = 3+D_{(8,3)}(P(1))\left(\log_{2}\big(1-P(1)\big)-\log_{2}\frac{3}{4}\right).
\end{equation*}
Let $j_1(x)\triangleq3+D_{(8,3)}(x)\left(\log_{2}\big(1-x\big)-\log_{2}\frac{3}{4}\right)$.
\textcolor{red}{By calculating the derivative}, we know that $j_{1}(x)$ is strictly decreasing over the interval $(0.5,1-2^{-\frac{19}{7}}]$, and the unique zero point over the interval is $x_1\approx0.81876$.
\begin{enumerate}
\item When $0.5<P(1)\leq x_1$, owing to Equation~\eqref{eq82}, it follows that
\begin{equation*}
\begin{aligned}
\frac{ A_{P}(L_{\nu})}{\max\{1,H(P)\}}&\leq Q_{(8,3)}(P(1),P(2),2)+D_{(8,3)}(P(1))     \\
                                               &\overset{(a)}{\leq} Q_{(8,3)}(P(1),1-P(1),2)+D_{(8,3)}(P(1)) ,
\end{aligned}
\end{equation*}
where $(a)$ is true because $R_{(8,3)}(P(1),2)=j_1(P(1))\geq0$ for all $P(1)\in(0.5,x_1]$.
Let $f_{3}(x)\triangleq Q_{(8,3)}(x,1-x,2)+D_{(8,3)}(x)$.
By calculating the derivative, we know that $f_{3}(x)$ is decreasing and then increasing over the interval $(0.5,x_1]$.
Thus, we obtain
\begin{equation*}
\begin{aligned}
 \frac{ A_{P}(L_{\nu})}{\max\{1,H(P)\}} & \leq f_{3}(P(1))   \\
                                                 & \leq \max\{f_{3}(0.5),f_{3}(x_1)\}    \\
                                                  & = f_{3}(0.5)=2 \\
\end{aligned}
\end{equation*}
for all $P(1)\in(0.5,x_1]$.
\item When $x_1<P(1)\leq 1-2^{-\frac{19}{7}}$, due to Equation~\eqref{eq82}, it follows that
\begin{equation*}
\begin{aligned}
\frac{ A_{P}(L_{\nu})}{\max\{1,H(P)\}}&\leq Q_{(8,3)}(P(1),P(2),2)+D_{(8,3)}(P(1))     \\
                                               &\overset{(a)}{\leq} Q_{(8,3)}(P(1),0,2)+D_{(8,3)}(P(1))  \\
                                               & = J_{(8,3)}(P(1),2)+D_{(8,3)}(P(1)),
\end{aligned}
\end{equation*}
where $(a)$ is true because $R_{(8,3)}(P(1),2)=j_1(P(1))<0$ for all $P(1)\in(x_1,1-2^{-\frac{19}{7}}]$.
Let $f_{4}(x)\triangleq J_{(8,3)}(x,2)+D_{(8,3)}(x)$.
By calculating the derivative, we know that $f_{4}(x)$ is strictly increasing over the interval $(x_1,1-2^{-\frac{19}{7}}]$.
Thus, we obtain
\begin{equation*}
\begin{aligned}
 \frac{ A_{P}(L_{\nu})}{\max\{1,H(P)\}}  \leq f_{4}(P(1))\leq f_{4}(1-2^{-\frac{19}{7}})<1.8761
\end{aligned}
\end{equation*}
for all $P(1)\in(x_1,1-2^{-\frac{19}{7}}]$.
\end{enumerate}

\item[4)-12)]\textcolor{blue}{The proof strategy for these cases is exactly the same as that employed for 1)-3).
To avoid repetition and improve the readability of this paper, we move these proofs to Appendix~\ref{appexdix:thm7}.}
\item[13)] Case $1-2^{-\frac{833}{65}}< P(1)< 1$:
Because $1-2^{-\frac{833}{65}}< P(1)< 1$, we obtain $\log_{2}\frac{1}{1-P(1)}>\frac{833}{65}$.
From 11) of Lemma~\ref{lemma3} and Lemma~\ref{lemma6}, the average codeword length of the $\nu$ code is
\begin{equation*}
\begin{aligned}
  A_{P}(L_{\nu}) & = P(1)+3P(2)+\sum_{i=3}^{\infty}P(i)L_{\nu}(i)    \\
                          & \leq P(1)+3P(2)+\frac{65}{64}\sum_{i=3}^{\infty}P(i)\Big(\log_{2}i+\frac{833}{65}\Big)  \\
                          & < P(1)+3P(2)+\frac{65}{64}\sum_{i=3}^{\infty}P(i)\Big[\log_{2}i-\log_{2}\big(1-P(1)\big)\Big]  \\
                          & \leq P(1)+3P(2)+\frac{65}{64}\Big[H(P)+\log_{2}\frac{3}{4}+P(1)\Big(\log_{2}P(1)-\log_{2}\frac{3}{4}\Big)    \\
                          &      \quad + P(2)\Big( \log_{2}\big(1-P(1)\big)-\log_{2}\frac{3}{4} \Big)\Big]  \\
                          & \overset{(a)}{<} \frac{65}{64}H(P)+\frac{65}{64}\log_{2}\frac{3}{4}+P(1)\Big(1+\frac{65}{64}\log_{2}P(1)-\frac{65}{64}\log_{2}\frac{3}{4}\Big),            \\
\end{aligned}
\end{equation*}
where $(a)$ is true because
\begin{equation*}
  3+\frac{65}{64}\Big( \log_{2}\big(1-P(1)\big)-\log_{2}\frac{3}{4} \Big)<0
\end{equation*}
for all $P(1)\in(1-2^{-\frac{833}{65}},1)$.
Furthermore, we have that
\begin{equation*}
\begin{aligned}
  \frac{ A_{P}(L_{\nu})}{\max\{1,H(P)\}} \leq \frac{65}{64}\Big(1+\log_{2}\frac{3}{4}\Big)+P(1)\Big(1+\frac{65}{64}\log_{2}P(1)-\frac{65}{64}\log_{2}\frac{3}{4}\Big)
\end{aligned}
\end{equation*}
due to Equation~\eqref{eq22}.
Let $f_{17}(x)\triangleq \frac{65}{64}\Big(1+\log_{2}\frac{3}{4}\Big)+x\Big(1+\frac{65}{64}\log_{2}x-\frac{65}{64}\log_{2}\frac{3}{4}\Big)$.
By calculating the derivative, we know that $f_{17}(x)$ is strictly increasing over the interval $(1-2^{-\frac{833}{65}},1)$.
Thus, we obtain
\begin{equation*}
\begin{aligned}
 \frac{ A_{P}(L_{\nu})}{\max\{1,H(P)\}} \leq f_{17}(P(1))< f_{17}(1)=2.015625
\end{aligned}
\end{equation*}
for all $P(1)\in(1-2^{-\frac{833}{65}},1)$.
\end{enumerate}
In summary, the $\nu$ code is a UCI and has an expansion factor $C_{\nu}=2.0386$. Therefore, $C^{*}\leq 2.0386$.
\end{proof}
{\color{blue}\begin{remark}
We address two points regarding the proof of Theorem~\ref{thm7} being divided into 13 cases.
First, this is because the codeword length upper bound inequalities given in Lemma~\ref{lemma3} are divided into 11 cases.
If the codeword length upper bound inequalities in Remark~\ref{remark2}, which are divided into only 5 cases, were used,
then the proof of Theorem~\ref{thm7} would only need to be divided into 6 cases.
However, the results would only show that the $\nu$ code has an expansion factor $C_{\nu}=\frac{15}{7}\approx2.1429$.
\textcolor{violet}{Second, it is important to emphasize the novelty of the obtained result.
As shown in Table \ref{tab2}, even if previously proposed codes were analysed using the proof strategy proposed in this paper,
it would still be impossible to obtain an expansion factor smaller than $2.1$.}
\end{remark}}

\section{The Lower Bound of $C_{\nu}^{*}$ and Comparisons}\label{sec_dis}
In this section, the lower bound of $C_{\nu}^{*}$ is provided, and the minimum expansion factor range of the $\nu$ code is compared with those listed in Table~\ref{tab2}.

First, we explain that $C_{\nu}^{*}>2.023936$.
Consider the probability distribution
\begin{equation*}
\overline{P}_2(a)=\left\{\begin{array}{lll}
0.992886244,            &\text{if } a=1\text{ ,}\\
\frac{1-0.992886244}{2^{132}},         &\text{if } a=2,3,\ldots,2^{132}+1 \text{ ,}\\
0,                          &\text{otherwise,}   \\
\end{array}\right.
\end{equation*}
for estimating $C_{\nu}^{*}$. We obtain
\begin{equation*}
\begin{aligned}
 \frac{ A_{\overline{P}_2}(L_{\nu})}{\max\{1,H(\overline{P}_2)\}} & = \frac{0.992886244+\frac{1-0.992886244}{2^{132}}\times\sum_{a=2}^{2^{132}+1}L_{\nu}(a)}{H(\overline{P}_2)} \\
                                             & > \frac{0.992886244+\frac{1-0.992886244}{2^{132}}\cdot7.891148088\times10^{41}}{H(\overline{P}_2)} \\
                                             & > 2.023936.
\end{aligned}
\end{equation*}
Therefore, we obtain $C_{\nu}^{*}>2.023936$.

Second, from Theorem~\ref{thm5}, we have that $2.023936<C_{\nu}^{*}\leq 2.0386$.
\textcolor{violet}{As shown in Table \ref{tab2}, the best previous result obtained regarding the upper bound of the minimum expansion factor was $C_{\iota}^{*}=2.5$, and the best previous result obtained regarding the \textcolor{red}{lower} bound of the minimum expansion factor was $C_{\omega}^{*}>2.1$.
Because $C_{\nu}^{*}\leq 2.0386<2.1$, the $\nu$ code constructed herein is currently optimal in terms of the minimum expansion factor.}

\section{Conclusions}\label{sec_con}
In this paper, we propose a new proof showing that the minimum expansion factor $C_{\mathcal{C}}^*\geq2$ for any UCI $\mathcal{C}$.
We prove a tighter probability inequality for decreasing distributions, which serves as a new tool for studying the properties of UCIs.
On the basis of this inequality, we improve the research approach for determining the upper bound of $C^*$.
A new class of UCIs, termed the $\nu$ code, is proposed. We prove that $2.023936<C_{\nu}^{*}\leq 2.0386$.
Therefore, the $\nu$ code is currently optimal in terms of the minimum expansion factor.
Since the $\nu$ code has an expansion factor $C_{\nu}=2.0386$, the upper bound of $C^*$ is reduced from $2.5$ to $2.0386$.
\textcolor{violet}{However, the proof of the upper bound of $C^*$ is not elegant because it needs to be divided into $13$ cases.}
Therefore, there are two unresolved issues, as stated below.
\begin{enumerate}
\item \textcolor{violet}{Given a specific structure or the codeword length of a class of UCIs $\mathcal{C}$, is there an algorithm that can quickly provide the exact value of the minimum expansion factor $C_{\mathcal{C}}^{*}$ or a smaller range?}
\item Although the range of $C^*$ has been narrowed to $2\leq C^*\leq2.0386$, the exact value of $C^*$ is still unknown.
\end{enumerate}

\appendices

\section{Proof of the Remaining Part of Lemma~\ref{lemma3}}
\label{appexdix:lemma 6}
\begin{proof}
\begin{enumerate}
 \item[2)] To prove that $L_{\nu}(a)\leq g_{(6,2)}(a,P(1))$ for all $3\leq a\in\mathcal{A}$, consider the following four cases.
\begin{enumerate}
\item When $a=3$, we obtain
 \begin{equation*}
\begin{aligned}
      g_{(6,2)}(3,P(1)) & =\frac{6}{2-\log_{2}[1-P(1)]}\big\{\log_{2}3-\log_{2}[1-P(1)]\big\}  \\
                        & =6+\frac{6\log_{2}3-12}{2-\log_{2}[1-P(1)]}   \\
                        & >6+\frac{6\log_{2}3-12}{2-\log_{2}(3^6\times2^{-10})}   \\
                        & = 5=L_{\nu}(3),
\end{aligned}
\end{equation*}
for all $P(1)\in(1-3^6\cdot2^{-10},0.5]$.
\item When $a=4,5$, since $L_{\nu}(4)=L_{\nu}(5)=6$ and $g_{(6,2)}(a,P(1))$ is a strictly increasing sequence with respect to $a$, it is only necessary to show that $g_{(6,2)}(4,P(1)) \geq L_{\nu}(4)=6$. We have that
    \[
     g_{(6,2)}(4,P(1))=\frac{6}{2-\log_{2}[1-P(1)]}\big\{2-\log_{2}[1-P(1)]\big\}=L_{\nu}(4).
    \]
\item When $a=6,7$, since $L_{\nu}(6)=L_{\nu}(7)=7$, similarly, we only need to show that $g_{(6,2)}(6,P(1))\geq L_{\nu}(6)=7$. We obtain
\begin{equation*}
\begin{aligned}
      g_{(6,2)}(6,P(1)) & =6+\frac{6\log_{2}6-12}{2-\log_{2}[1-P(1)]}   \\
                        & \geq 6+\frac{6\log_{2}6-12}{2-\log_{2}(1-0.5)}   \\
                        & =2+2\log_{2}6  \\
                        & >7=L_{\nu}(6),
\end{aligned}
\end{equation*}
for all $P(1)\in(1-3^6\cdot2^{-10},0.5]$.
\item When $a\geq8$, according to~Equation \eqref{eq95}, we only need to show that
  \begin{equation} \label{eq94}
      h_{(6,2)}\left(t,x\right)-\widetilde{\Delta}(t)  \geq   2\lfloor\log_{2}(1+t)\rfloor
  \end{equation}
    for all $3\leq t\in\mathbb{N}$ and $10-6\log_{2}3 < x \leq 1$. We prove that~Equation \eqref{eq94} holds when $3\leq t\in\mathbb{N}$ and $10-6\log_{2}3 < x \leq 1$ by proving that
      \begin{equation} \label{eq93}
      h_{(6,2)}\left(t,x\right) \geq   2\lfloor\log_{2}(1+t)\rfloor
  \end{equation}
    holds when $3\leq t\in\mathbb{N}$ and $10-6\log_{2}3 < x \leq 1$.

   When $t=3$, the left-hand side of~Equation \eqref{eq93} is
  \begin{equation*}
\begin{aligned}
     h_{(6,2)}\left(3,x\right) & =  \left(\frac{6}{2+x}-1\right)\times3+\frac{6x}{2+x}-1   \\
                               & =2 + \frac{6}{2+x}   \\
                               &\geq 2 + \frac{6}{2+1}  \\
                               & = 4 = 2\lfloor\log_{2}(1+3)\rfloor   .
\end{aligned}
\end{equation*}
Thereafter, each increase of 2 on the right-hand side of Equation \eqref{eq93} increases $t$ by at least $7-3=4$,
and thus, the left-hand side of Equation \eqref{eq93} increases by at least
\[
  \left(\frac{6}{2+x}-1\right)\times 4\geq \left(\frac{6}{2+1}-1\right)\times 4 =4.
\]
Therefore, Equation \eqref{eq93} holds when $3\leq t\in\mathbb{N}$ and $10-6\log_{2}3 < x \leq 1$.
 \end{enumerate}
\item[3)] To prove that $L_{\nu}(a)\leq g_{(8,3)}(a,P(1))$ for all $3\leq a\in\mathcal{A}$, consider the following four cases.
\begin{enumerate}
\item When $a=3$, we obtain
 \begin{equation*}
\begin{aligned}
      g_{(8,3)}(3,P(1)) & =8+\frac{8\log_{2}3-24}{3-\log_{2}[1-P(1)]}   \\
                        & >8+\frac{8\log_{2}3-24}{3-\log_{2}(1-0.5)}   \\
                        & > 5=L_{\nu}(3),
\end{aligned}
\end{equation*}
for all $P(1)\in(0.5,1-2^{-\frac{19}{7}}]$.
\item When $a=4,5$, since $L_{\nu}(4)=L_{\nu}(5)=6$ and $g_{(8,3)}(a,P(1))$ is a strictly increasing sequence with respect to $a$, it is only necessary to show that $g_{(8,3)}(4,P(1)) \geq L_{\nu}(4)=6$. We have that
    \begin{equation*}
\begin{aligned}
      g_{(8,3)}(4,P(1)) & =8-\frac{8}{3-\log_{2}[1-P(1)]}   \\
                        & >8-\frac{8}{3-\log_{2}(1-0.5)}   \\
                        & = 6=L_{\nu}(4),
\end{aligned}
\end{equation*}
for all $P(1)\in(0.5,1-2^{-\frac{19}{7}}]$.
\item When $a=6,7$, since $L_{\nu}(6)=L_{\nu}(7)=7$, similarly, we only need to show that $g_{(8,3)}(6,P(1))\geq L_{\nu}(6)=7$. We obtain
 \begin{equation*}
\begin{aligned}
      g_{(8,3)}(6,P(1)) & =8+\frac{8\log_{2}6-24}{3-\log_{2}[1-P(1)]}   \\
                        & >8+\frac{8\log_{2}6-24}{3-\log_{2}(1-0.5)}   \\
                        & >7=L_{\nu}(6),
\end{aligned}
\end{equation*}
for all $P(1)\in(0.5,1-2^{-\frac{19}{7}}]$.
\item When $a\geq8$, based on~Equation \eqref{eq95}, we only need to show that
  \begin{equation} \label{eq92}
      h_{(8,3)}\left(t,x\right)-\widetilde{\Delta}(t)\geq   2\lfloor\log_{2}(1+t)\rfloor
  \end{equation}
    for all $3\leq t\in\mathbb{N}$ and $1 < x \leq \frac{19}{7}$. When $3\leq t\leq 6$, the left-hand side of~Equation \eqref{eq92} is
  \begin{equation*}
\begin{aligned}
    h_{(8,3)}\left(t,x\right)-\widetilde{\Delta}(t) & =  \left(\frac{8}{3+x}-1\right)\times t+\frac{8x}{3+x}-1 - 0   \\
    & \geq  \left(\frac{8}{3+x}-1\right)\times3+\frac{8x}{3+x}-1   \\
       & = 4 = 2\lfloor\log_{2}(1+t)\rfloor   .
\end{aligned}
\end{equation*}
When $t=7$, the left-hand side of~Equation \eqref{eq92} is
  \begin{equation*}
\begin{aligned}
    h_{(8,3)}\left(7,x\right)-\widetilde{\Delta}(7) & =  \left(\frac{8}{3+x}-1\right)\times 7+\frac{8x}{3+x}-1 + 1   \\
    &  = 1+\frac{32}{3+x}\geq  1+\frac{32}{3+\frac{19}{7}}  \\
       & = 6.6 > 2\lfloor\log_{2}(1+7)\rfloor,
\end{aligned}
\end{equation*}
for all $x\in(1,\frac{19}{7}]$. When $t\geq8$, we prove that~Equation \eqref{eq92} holds by proving that
      \begin{equation} \label{eq91}
      h_{(8,3)}\left(t,x\right) \geq   2\lfloor\log_{2}(1+t)\rfloor
  \end{equation}
    holds when $8\leq t\in\mathbb{N}$ and $1< x \leq \frac{19}{7}$. When $t=8$, the left-hand side of~Equation \eqref{eq91} is
\begin{equation*}
\begin{aligned}
     h_{(8,3)}\left(8,x\right) & =  \left(\frac{8}{3+x}-1\right)\times8+\frac{8x}{3+x}-1   \\
                               & =2 + \frac{40}{3+x}-1   \\
                               &\geq  \frac{40}{3+\frac{19}{7}}-1  \\
                               & = 6 = 2\lfloor\log_{2}(1+8)\rfloor   .
\end{aligned}
\end{equation*}
Thereafter, each increase of 2 on the right-hand side of Equation \eqref{eq91} increases $t$ by at least $15-8=7$,
and thus the left-hand side of Equation \eqref{eq91} increases by at least
\[
  \left(\frac{8}{3+x}-1\right)\times 7\geq \left(\frac{8}{3+\frac{19}{7}}-1\right)\times 7 =2.8.
\]
Therefore, Equation \eqref{eq91} holds when $8\leq t\in\mathbb{N}$ and $1< x \leq \frac{19}{7}$.
 \end{enumerate}
\item[4)] When $3\leq a\leq7$, we prove that $L_{\nu}(a)\leq g_{(15,8)}(a,P(1))$ holds using the same approach as that employed in $3)$. When $8\leq a\in\mathcal{A}$, based on~Equation \eqref{eq95}, we only need to show that
  \begin{equation} \label{eq90}
      h_{(15,8)}\left(t,x\right)-\widetilde{\Delta}(t)\geq   2\lfloor\log_{2}(1+t)\rfloor
  \end{equation}
    for all $3\leq t\in\mathbb{N}$ and $\frac{19}{7} < x \leq \frac{41}{8}$.
    We consider the following seven cases.
\begin{enumerate}
\item  When $3\leq t \leq 6$, the left-hand side of~Equation \eqref{eq90} is
\begin{equation*}
\begin{aligned}
     h_{(15,8)}\left(t,x\right)-\widetilde{\Delta}(t) & =  \left(\frac{15}{8+x}-1\right)\times t+\frac{15x}{8+x}-1-0   \\
                                & \geq  \left(\frac{15}{8+x}-1\right)\times 3+\frac{15x}{8+x}-1   \\
                                & = 11-\frac{75}{8+x} > 11-\frac{75}{8+\frac{19}{7}}   \\
       & = 4 = 2\lfloor\log_{2}(1+t)\rfloor.
\end{aligned}
\end{equation*}
\item When $t=7$, the left-hand side of~Equation \eqref{eq90} is
\begin{equation*}
\begin{aligned}
     h_{(15,8)}\left(7,x\right)-\widetilde{\Delta}(7) & =  \left(\frac{15}{8+x}-1\right)\times 7+\frac{15x}{8+x}-1+1   \\
                                & = 8-\frac{15}{8+x} > 8-\frac{15}{8+\frac{19}{7}}   \\
       & = 6.6 >2\lfloor\log_{2}(1+7)\rfloor.
\end{aligned}
\end{equation*}
\item When $8\leq t \leq 14$, the left-hand side of~Equation \eqref{eq90} is
\begin{equation*}
\begin{aligned}
     h_{(15,8)}\left(t,x\right)-\widetilde{\Delta}(t) & =  \left(\frac{15}{8+x}-1\right)\times t+\frac{15x}{8+x}-1-0   \\
                                & \geq  \left(\frac{15}{8+x}-1\right)\times 8+\frac{15x}{8+x}-1   \\
       & = 6 = 2\lfloor\log_{2}(1+t)\rfloor.
\end{aligned}
\end{equation*}
\item When $15\leq t\leq 24$, the left-hand side of~Equation \eqref{eq90} is
\begin{equation*}
\begin{aligned}
     h_{(15,8)}\left(t,x\right)-\widetilde{\Delta}(t) & =  \left(\frac{15}{8+x}-1\right)\times t+\frac{15x}{8+x}-1+1   \\
                                & \geq  \left(\frac{15}{8+x}-1\right)\times 15+\frac{15x}{8+x}  \\
                                & = \frac{105}{8+x}\geq\frac{105}{8+\frac{41}{8}}  \\
       & = 8 = 2\lfloor\log_{2}(1+t)\rfloor.
\end{aligned}
\end{equation*}
\item When $25\leq t\leq 30$, the left-hand side of~Equation \eqref{eq90} is
\begin{equation*}
\begin{aligned}
     h_{(15,8)}\left(t,x\right)-\widetilde{\Delta}(t) & =  \left(\frac{15}{8+x}-1\right)\times t+\frac{15x}{8+x}-1+0   \\
                                & \geq  \left(\frac{15}{8+x}-1\right)\times 25+\frac{15x}{8+x}-1  \\
                                & = \frac{255}{8+x}-11\geq\frac{255}{8+\frac{41}{8}}-11  \\
                                & > 8 = 2\lfloor\log_{2}(1+t)\rfloor.
\end{aligned}
\end{equation*}
\item When $31\leq t\leq 40$, the left-hand side of~Equation \eqref{eq90} is
\begin{equation*}
\begin{aligned}
     h_{(15,8)}\left(t,x\right)-\widetilde{\Delta}(t) & \geq  \left(\frac{15}{8+x}-1\right)\times t+\frac{15x}{8+x}-1+1   \\
                                & \geq  \left(\frac{15}{8+x}-1\right)\times 31+\frac{15x}{8+x}  \\
                                & = \frac{345}{8+x}-16\geq\frac{345}{8+\frac{41}{8}}-16 \\
                                & > 10 = 2\lfloor\log_{2}(1+t)\rfloor.
\end{aligned}
\end{equation*}
\item When $t\geq41$, we prove that~Equation \eqref{eq90} holds by proving that
      \begin{equation} \label{eq89}
      h_{(15,8)}\left(t,x\right) \geq   2\lfloor\log_{2}(1+t)\rfloor
  \end{equation}
    holds when $41\leq t\in\mathbb{N}$ and $\frac{19}{7} < x \leq \frac{41}{8}$.
    When $t=41$, the left-hand side of~Equation \eqref{eq89} is
\begin{equation*}
\begin{aligned}
      h_{(15,8)}\left(41,x\right) & =  \left(\frac{15}{8+x}-1\right)\times 41+\frac{15x}{8+x}-1   \\
                               & = \frac{495}{8+x}-27\geq\frac{495}{8+\frac{41}{8}}-27 \\
                               & > 10 = 2\lfloor\log_{2}(1+41)\rfloor   .
\end{aligned}
\end{equation*}
Thereafter, each increase of 2 on the right-hand side of Equation \eqref{eq89} increases $t$ by at least $63-41=22$,
and thus, the left-hand side of Equation \eqref{eq89} increases by at least
\[
  \left(\frac{15}{8+x}-1\right)\times 22\geq \left(\frac{15}{8+\frac{41}{8}}-1\right)\times 22 =\frac{22}{7}.
\]
Therefore, Equation \eqref{eq89} holds when $41\leq t\in\mathbb{N}$ and $\frac{19}{7} < x \leq \frac{41}{8}$.
 \end{enumerate}

 \item[5)] When $3\leq a\leq7$, we prove that $L_{\nu}(a)\leq g_{(23,15)}(a,P(1))$ holds using the same approach as that employed for $3)$.
 When $8\leq a\in\mathcal{A}$, from~Equation \eqref{eq95}, we only need to show that
  \begin{equation} \label{eq88}
      h_{(23,15)}\left(t,x\right)-\widetilde{\Delta}(t)\geq   2\lfloor\log_{2}(1+t)\rfloor
  \end{equation}
    for all $3\leq t\in\mathbb{N}$ and $\frac{41}{8} < x \leq \frac{65}{11}$.
    We consider the following ten cases.
\begin{enumerate}
\item  When $3\leq t \leq 6$, the left-hand side of~Equation \eqref{eq88} is
\begin{equation*}
\begin{aligned}
     h_{(23,15)}\left(t,x\right)-\widetilde{\Delta}(t) & =  \left(\frac{23}{15+x}-1\right)\times t+\frac{23x}{15+x}-1-0   \\
                                & \geq  \left(\frac{23}{15+x}-1\right)\times 3+\frac{23x}{15+x}-1   \\
                                & = 19-\frac{276}{15+x} > 19-\frac{276}{15+\frac{41}{8}}   \\
       & > 5 > 2\lfloor\log_{2}(1+t)\rfloor.
\end{aligned}
\end{equation*}
\item When $t=7$, the left-hand side of~Equation \eqref{eq88} is
\begin{equation*}
\begin{aligned}
     h_{(23,15)}\left(7,x\right)-\widetilde{\Delta}(7) & =  \left(\frac{23}{15+x}-1\right)\times 7+\frac{23x}{15+x}-1+1   \\
                                & = 16-\frac{184}{15+x} > 16-\frac{184}{15+\frac{41}{8}}   \\
       & = \frac{48}{7} >2\lfloor\log_{2}(1+7)\rfloor.
\end{aligned}
\end{equation*}
\item When $8\leq t \leq 14$, the left-hand side of~Equation \eqref{eq88} is
\begin{equation*}
\begin{aligned}
     h_{(23,15)}\left(t,x\right)-\widetilde{\Delta}(t) &\geq \left(\frac{23}{15+x}-1\right)\times 8+\frac{23x}{15+x}-1   \\
        & = 14-\frac{161}{15+x} > 14-\frac{161}{15+\frac{41}{8}}   \\
       & = 6 = 2\lfloor\log_{2}(1+t)\rfloor.
\end{aligned}
\end{equation*}
\item When $15\leq t\leq 24$, the left-hand side of~Equation \eqref{eq88} is
\begin{equation*}
\begin{aligned}
     h_{(23,15)}\left(t,x\right)-\widetilde{\Delta}(t) & =  \left(\frac{23}{15+x}-1\right)\times t+\frac{23x}{15+x}-1+1   \\
                                & \geq   \left(\frac{23}{15+x}-1\right)\times 15+\frac{23x}{15+x}  \\
       & = 8 = 2\lfloor\log_{2}(1+t)\rfloor.
\end{aligned}
\end{equation*}
\item When $25\leq t\leq 30$, the left-hand side of~Equation \eqref{eq88} is
\begin{equation*}
\begin{aligned}
     h_{(23,15)}\left(t,x\right)-\widetilde{\Delta}(t) & \geq  \left(\frac{23}{15+x}-1\right)\times 25+\frac{23x}{15+x}-1  \\
                                & = \frac{230}{15+x}-3\geq\frac{230}{15+\frac{65}{11}}-3 \\
                                & = 8 = 2\lfloor\log_{2}(1+t)\rfloor.
\end{aligned}
\end{equation*}
\item When $31\leq t\leq 36$, the left-hand side of~Equation \eqref{eq88} is
\begin{equation*}
\begin{aligned}
     h_{(23,15)}\left(t,x\right)-\widetilde{\Delta}(t) & =  \left(\frac{23}{15+x}-1\right)\times t+\frac{23x}{15+x}-1+2   \\
                                & \geq  \left(\frac{23}{15+x}-1\right)\times 31+\frac{23x}{15+x}+1  \\
                                & = \frac{368}{15+x}-7\geq\frac{368}{15+\frac{65}{11}}-7 \\
                                & = 10.6 > 2\lfloor\log_{2}(1+t)\rfloor.
\end{aligned}
\end{equation*}
\item When $37\leq t\leq 50$, the left-hand side of~Equation \eqref{eq88} is
\begin{equation*}
\begin{aligned}
     h_{(23,15)}\left(t,x\right)-\widetilde{\Delta}(t) & =  \left(\frac{23}{15+x}-1\right)\times t+\frac{23x}{15+x}-1+1   \\
                                & \geq  \left(\frac{23}{15+x}-1\right)\times 37+\frac{23x}{15+x} \\
                                & = \frac{506}{15+x}-14\geq\frac{506}{15+\frac{65}{11}}-14 \\
                                & = 10.2 > 2\lfloor\log_{2}(1+t)\rfloor.
\end{aligned}
\end{equation*}
\item When $51\leq t\leq 62$, the left-hand side of~Equation \eqref{eq88} is
\begin{equation*}
\begin{aligned}
     h_{(23,15)}\left(t,x\right)-\widetilde{\Delta}(t) & =  \left(\frac{23}{15+x}-1\right)\times t+\frac{23x}{15+x}-1   \\
                                & \geq  \left(\frac{23}{15+x}-1\right)\times 51+\frac{23x}{15+x}-1 \\
                                & = \frac{828}{15+x}-29\geq\frac{828}{15+\frac{65}{11}}-29\\
                                & = 10.6 > 2\lfloor\log_{2}(1+t)\rfloor.
\end{aligned}
\end{equation*}
\item When $63\leq t\leq 84$, the left-hand side of~Equation \eqref{eq88} is
\begin{equation*}
\begin{aligned}
     h_{(23,15)}\left(t,x\right)-\widetilde{\Delta}(t) & \geq  \left(\frac{23}{15+x}-1\right)\times t+\frac{23x}{15+x}-1+1   \\
                                & \geq  \left(\frac{23}{15+x}-1\right)\times 63+\frac{23x}{15+x} \\
                                & = \frac{1104}{15+x}-40\geq\frac{1104}{15+\frac{65}{11}}-40\\
                                & = 12.8 > 2\lfloor\log_{2}(1+t)\rfloor.
\end{aligned}
\end{equation*}
\item When $t\geq 85$, we prove that~Equation \eqref{eq88} holds by proving that
      \begin{equation} \label{eq87}
      h_{(23,15)}\left(t,x\right) \geq   2\lfloor\log_{2}(1+t)\rfloor
  \end{equation}
    holds when $85\leq t\in\mathbb{N}$ and $\frac{41}{8} < x \leq \frac{65}{11}$.
    When $t=85$, the left-hand side of~Equation \eqref{eq87} is
\begin{equation*}
\begin{aligned}
      h_{(23,15)}\left(85,x\right) & =  \left(\frac{23}{15+x}-1\right)\times 85+\frac{23x}{15+x}-1   \\
                               & = \frac{1610}{15+x}-63\geq\frac{1610}{15+\frac{65}{11}}-63 \\
                               & = 14 > 2\lfloor\log_{2}(1+85)\rfloor.
\end{aligned}
\end{equation*}
Thereafter, each increase of 2 on the right-hand side of Equation \eqref{eq87} increases $t$ by at least $127-85=42$,
and thus, the left-hand side of Equation \eqref{eq87} increases by at least
\[
  \left(\frac{23}{15+x}-1\right)\times 42\geq \left(\frac{23}{15+\frac{65}{11}}-1\right)\times 42 =4.2.
\]
Therefore, Equation \eqref{eq87} holds when $85\leq t\in\mathbb{N}$ and $\frac{41}{8} < x \leq \frac{65}{11}$.
 \end{enumerate}

 \item[6)] When $3\leq a\leq7$, we prove that $L_{\nu}(a)\leq g_{(34,25)}(a,P(1))$ holds using the same approach as that employed for $3)$.
 When $8\leq a\in\mathcal{A}$, based on~Equation \eqref{eq95}, we only need to show that
  \begin{equation} \label{eq86}
      h_{(34,25)}\left(t,x\right)-\widetilde{\Delta}(t)\geq   2\lfloor\log_{2}(1+t)\rfloor
  \end{equation}
    for all $3\leq t\in\mathbb{N}$ and $\frac{65}{11} < x \leq \frac{83}{13}$.
    We consider the following nine cases.
\begin{enumerate}
\item  When $3\leq t \leq 6$, the left-hand side of~Equation \eqref{eq86} is
\begin{equation*}
\begin{aligned}
     h_{(34,25)}\left(t,x\right)-\widetilde{\Delta}(t) & =  \left(\frac{34}{25+x}-1\right)\times t+\frac{34x}{25+x}-1   \\
                                & \geq  \left(\frac{34}{25+x}-1\right)\times 3+\frac{34x}{25+x}-1   \\
                                & = 30-\frac{748}{25+x} > 30-\frac{748}{25+\frac{65}{11}}   \\
       & =5.8 > 2\lfloor\log_{2}(1+t)\rfloor.
\end{aligned}
\end{equation*}
\item When $7\leq t \leq 14$, the left-hand side of~Equation \eqref{eq86} is
\begin{equation*}
\begin{aligned}
     h_{(34,25)}\left(t,x\right)-\widetilde{\Delta}(t) & \geq   \left(\frac{34}{25+x}-1\right)\times t +\frac{34x}{25+x}-1   \\
                               & \geq     \left(\frac{34}{25+x}-1\right)\times 7+\frac{34x}{25+x}-1   \\
                               & = 26-\frac{612}{25+x} > 26-\frac{612}{25+\frac{65}{11}}   \\
       & = 6.2 > 2\lfloor\log_{2}(1+t)\rfloor.
\end{aligned}
\end{equation*}
\item When $15\leq t\leq 24$, the left-hand side of~Equation \eqref{eq86} is
\begin{equation*}
\begin{aligned}
     h_{(34,25)}\left(t,x\right)-\widetilde{\Delta}(t) & =  \left(\frac{34}{25+x}-1\right)\times t+\frac{34x}{25+x}-1+1   \\
                                & \geq   \left(\frac{34}{25+x}-1\right)\times 15+\frac{34x}{25+x}  \\
                                & = 19-\frac{340}{25+x} > 19-\frac{340}{25+\frac{65}{11}}   \\
       & = 8 = 2\lfloor\log_{2}(1+t)\rfloor.
\end{aligned}
\end{equation*}
\item When $25\leq t\leq 30$, the left-hand side of~Equation \eqref{eq86} is
\begin{equation*}
\begin{aligned}
     h_{(34,25)}\left(t,x\right)-\widetilde{\Delta}(t) & =  \left(\frac{34}{25+x}-1\right)\times t+\frac{34x}{25+x}-1  \\
                                   &   \geq  \left(\frac{34}{25+x}-1\right)\times 25+\frac{34x}{25+x}-1  \\
                                & = 8 = 2\lfloor\log_{2}(1+t)\rfloor.
\end{aligned}
\end{equation*}
\item When $31\leq t\leq 36$, the left-hand side of~Equation \eqref{eq86} is
\begin{equation*}
\begin{aligned}
     h_{(34,25)}\left(t,x\right)-\widetilde{\Delta}(t) & =  \left(\frac{34}{25+x}-1\right)\times t+\frac{34x}{25+x}-1+2   \\
                                & \geq  \left(\frac{34}{25+x}-1\right)\times 31+\frac{34x}{25+x}+1  \\
                                & = \frac{204}{25+x}+4\geq\frac{204}{25+\frac{83}{13}}+4 \\
                                & = 10.5 > 2\lfloor\log_{2}(1+t)\rfloor.
\end{aligned}
\end{equation*}
\item When $37\leq t\leq 50$, the left-hand side of~Equation \eqref{eq86} is
\begin{equation*}
\begin{aligned}
     h_{(34,25)}\left(t,x\right)-\widetilde{\Delta}(t) & =  \left(\frac{34}{25+x}-1\right)\times t+\frac{34x}{25+x}-1+1   \\
                                & \geq  \left(\frac{34}{25+x}-1\right)\times 37+\frac{34x}{25+x} \\
                                & = \frac{408}{25+x}-3\geq\frac{408}{25+\frac{83}{13}}-3\\
                                & = 10 = 2\lfloor\log_{2}(1+t)\rfloor.
\end{aligned}
\end{equation*}
\item When $51\leq t\leq 62$, the left-hand side of~Equation \eqref{eq86} is
\begin{equation*}
\begin{aligned}
     h_{(34,25)}\left(t,x\right)-\widetilde{\Delta}(t) & \geq  \left(\frac{34}{25+x}-1\right)\times 51+\frac{34x}{25+x}-1 \\
                                & = \frac{884}{25+x}-18\geq\frac{884}{25+\frac{83}{13}}-18\\
                                & > 10 = 2\lfloor\log_{2}(1+t)\rfloor.
\end{aligned}
\end{equation*}
\item When $63\leq t\leq 84$, the left-hand side of~Equation \eqref{eq86} is
\begin{equation*}
\begin{aligned}
     h_{(34,25)}\left(t,x\right)-\widetilde{\Delta}(t) & \geq  \left(\frac{34}{25+x}-1\right)\times t+\frac{34x}{25+x}-1+1   \\
                                & \geq  \left(\frac{34}{25+x}-1\right)\times 63+\frac{34x}{25+x} \\
                                & = \frac{1292}{25+x}-29\geq\frac{1292}{25+\frac{83}{13}}-29\\
                                & > 12 = 2\lfloor\log_{2}(1+t)\rfloor.
\end{aligned}
\end{equation*}
\item When $t\geq 85$, we prove that~Equation \eqref{eq86} holds by proving that
      \begin{equation} \label{eq85}
      h_{(34,25)}\left(t,x\right) \geq   2\lfloor\log_{2}(1+t)\rfloor
  \end{equation}
    holds when $85\leq t\in\mathbb{N}$ and $\frac{65}{11} < x \leq \frac{83}{13}$.
    When $t=85$, the left-hand side of~Equation \eqref{eq85} is
\begin{equation*}
\begin{aligned}
      h_{(34,25)}\left(85,x\right) & =  \left(\frac{34}{25+x}-1\right)\times 85+\frac{34x}{25+x}-1   \\
                               & = \frac{2040}{25+x}-52\geq\frac{2040}{25+\frac{83}{13}}-52 \\
                               & = 13 > 2\lfloor\log_{2}(1+85)\rfloor.
\end{aligned}
\end{equation*}
Thereafter, each increase of 2 on the right-hand side of Equation \eqref{eq85} increases $t$ by at least $127-85=42$,
and thus, the left-hand side of Equation \eqref{eq85} increases by at least
\[
  \left(\frac{34}{25+x}-1\right)\times 42\geq \left(\frac{34}{25+\frac{83}{13}}-1\right)\times 42 =3.5.
\]
Therefore, Equation \eqref{eq85} holds when $85\leq t\in\mathbb{N}$ and $\frac{65}{11} < x \leq \frac{83}{13}$.
 \end{enumerate}

\item[7)] When $3\leq a\leq7$, we prove that $L_{\nu}(a)\leq g_{(47,37)}(a,P(1))$ holds using the same way as that employed for $3)$.
When $8\leq a\in\mathcal{A}$, we prove that $L_{\nu}(a)\leq g_{(47,37)}(a,P(1))$ holds using the same approach as that employed for $5)$ and $6)$.

\item[8)] When $3\leq a\leq7$, we prove that $L_{\nu}(a)\leq g_{(62,51)}(a,P(1))$ holds using the same approach as that employed for $3)$.
When $8\leq a\in\mathcal{A}$, we prove that $L_{\nu}(a)\leq g_{(62,51)}(a,P(1))$ holds using the same approach as that employed for $5)$ and $6)$.

\item[9)] When $3\leq a\leq7$, we prove that $L_{\nu}(a)\leq g_{(98,85)}(a,P(1))$ holds using the same approach as that employed for $3)$.
When $8\leq a\in\mathcal{A}$, we prove that $L_{\nu}(a)\leq g_{(98,85)}(a,P(1))$ holds using the same approach as that employed for $5)$ and $6)$.

\item[10)] When $3\leq a\leq7$, we prove that $L_{\nu}(a)\leq g_{(142,127)}(a,P(1))$ holds using the same approach as that employed for $3)$.
When $8\leq a\in\mathcal{A}$, we prove that $L_{\nu}(a)\leq g_{(142,127)}(a,P(1))$ holds using the same approach as that employed for $5)$ and $6)$.
\end{enumerate}
\end{proof}
\section{Proof of the Remaining Part of Theorem~\ref{thm7} }\label{appexdix:thm7}
\begin{proof}
\begin{enumerate}
\item[4)] Case $1-2^{-\frac{19}{7}}<P(1)\leq 0.92$: From 4) of Lemma~\ref{lemma3}, we know that $c_1=15$, and $c_2=8$.
Consider $L=4$. From Equation~\eqref{eq82}, it follows that
\begin{equation*}
\begin{aligned}
\frac{ A_{P}(L_{\nu})}{\max\{1,H(P)\}}&\leq Q_{(15,8)}(P(1),P(2),4)+D_{(15,8)}(P(1))     \\
                                               &\overset{(a)}{\leq} Q_{(15,8)}(P(1),1-P(1),4)+D_{(15,8)}(P(1)) ,
\end{aligned}
\end{equation*}
where $(a)$ is true because
\begin{equation*}
  R_{(c_1,c_2)}(P(1),L) =  3+D_{(15,8)}(P(1))\left(\log_{2}\big(1-P(1)\big)+\sum_{n=2}^{3}\log_{2}C_{n} -3\log_{2}C_{4}\right)>0
\end{equation*}
for all $P(1)\in(1-2^{-\frac{19}{7}},0.92]$.
Let $f_{5}(x)\triangleq Q_{(15,8)}(x,1-x,4)+D_{(15,8)}(x)$.
By calculating the derivative, we know that $f_{5}(x)$ is decreasing and then increasing over the interval $(1-2^{-\frac{19}{7}},0.92]$.
Therefore, we obtain
\begin{equation*}
\begin{aligned}
 \frac{ A_{P}(L_{\nu})}{\max\{1,H(P)\}} & \leq f_{5}(P(1))   \\
                                                 & \leq \max\{f_{5}(1-2^{-\frac{19}{7}}),f_{5}(0.92)\}    \\
                                                  & = f_{5}(0.92)<1.9999 \\
\end{aligned}
\end{equation*}
for all $P(1)\in(1-2^{-\frac{19}{7}},0.92]$.

\item[5)] Case $0.92<P(1)\leq 1-2^{-\frac{41}{8}}\approx0.97134$: From 4) of Lemma~\ref{lemma3}, we know that $c_1=15$, and $c_2=8$.
If $L=5$, then
\begin{equation*}
 R_{(c_1,c_2)}(P(1),L) = 3+D_{(15,8)}(P(1))\left(\log_{2}\big(1-P(1)\big)+\sum_{n=2}^{4}\log_{2}C_{n} -4\log_{2}C_{5}\right).
\end{equation*}
Let $j_2(x)\triangleq 3+D_{(15,8)}(x)\left(\log_{2}\big(1-x\big)+\sum_{n=2}^{4}\log_{2}C_{n} -4\log_{2}C_{5}\right)$.
\textcolor{red}{By calculating the derivative}, we know that $j_{2}(x)$ is strictly decreasing over the interval $(0.92,1-2^{-\frac{41}{8}}]$, and the unique zero point over the interval is $x_2\approx0.95602$.
\begin{enumerate}
\item When $0.92<P(1)\leq x_2$, owing to Equation~\eqref{eq82}, it follows that
\begin{equation*}
\begin{aligned}
\frac{ A_{P}(L_{\nu})}{\max\{1,H(P)\}}&\leq Q_{(15,8)}(P(1),P(2),5)+D_{(15,8)}(P(1))     \\
                                               &\overset{(a)}{\leq} Q_{(15,8)}(P(1),1-P(1),5)+D_{(15,8)}(P(1)) ,
\end{aligned}
\end{equation*}
where $(a)$ is true because  $R_{(15,8)}(P(1),5)=j_2(P(1))\geq0$ for all $P(1)\in(0.92,x_2]$.
Let $f_{6}(x)\triangleq Q_{(15,8)}(x,1-x,5)+D_{(15,8)}(x)$.
By calculating the derivative, we know that $f_{6}(x)$ is strictly decreasing over the interval $(0.92,x_2]$.
Thus, we obtain
\begin{equation*}
\begin{aligned}
 \frac{ A_{P}(L_{\nu})}{\max\{1,H(P)\}}\leq f_{6}(P(1)) < f_{6}(0.92)<2.0313  \\
\end{aligned}
\end{equation*}
for all $P(1)\in(0.92,x_2]$.
\item When $x_2<P(1)\leq 1-2^{-\frac{41}{8}}$, according to Equation~\eqref{eq82}, it follows that
\begin{equation*}
\begin{aligned}
\frac{ A_{P}(L_{\nu})}{\max\{1,H(P)\}}&\leq Q_{(15,8)}(P(1),P(2),5)+D_{(15,8)}(P(1))      \\
                                               &\overset{(a)}{\leq} J_{(15,8)}(P(1),5)+D_{(15,8)}(P(1)),
\end{aligned}
\end{equation*}
where $(a)$ is true because  $R_{(15,8)}(P(1),5)=j_2(P(1))<0$ for all $P(1)\in(x_2,1-2^{-\frac{41}{8}}]$.
Let $f_{7}(x)\triangleq J_{(15,8)}(x,5)+D_{(15,8)}(x)$.
By calculating the derivative, we know that $f_{7}(x)$ is increasing and then decreasing over the interval $(x_2,1-2^{-\frac{41}{8}}]$,
and $f_{7}(x)$ takes its maximum value over $(x_2,1-2^{-\frac{41}{8}}]$ at $x_3\approx0.96883$.
Therefore, we obtain
\begin{equation*}
\begin{aligned}
 \frac{ A_{P}(L_{\nu})}{\max\{1,H(P)\}}  \leq f_{7}(P(1))\leq f_{7}(x_3)<2.0345
\end{aligned}
\end{equation*}
for all $P(1)\in(x_2,1-2^{-\frac{41}{8}}]$.
\end{enumerate}

\item[6)] Case $1-2^{-\frac{41}{8}}<P(1)\leq 1-2^{-\frac{65}{11}}\approx0.98336$: From 5) of Lemma~\ref{lemma3}, we know that $c_1=23$, and $c_2=15$.
Consider $L=6$. From Equation~\eqref{eq82}, it follows that
\begin{equation*}
\begin{aligned}
\frac{ A_{P}(L_{\nu})}{\max\{1,H(P)\}}&\leq Q_{(23,15)}(P(1),P(2),6)+D_{(23,15)}(P(1))     \\
                                               &\overset{(a)}{\leq} J_{(23,15)}(P(1),6)+D_{(23,15)}(P(1)) ,
\end{aligned}
\end{equation*}
where $(a)$ is true because
\begin{equation*}
  R_{(c_1,c_2)}(P(1),L) =  3+D_{(23,15)}(P(1))\left(\log_{2}\big(1-P(1)\big)+\sum_{n=2}^{5}\log_{2}C_{n} -5\log_{2}C_{6}\right)<0
\end{equation*}
for all $P(1)\in(1-2^{-\frac{41}{8}},1-2^{-\frac{65}{11}}]$.
Let $f_{8}(x)\triangleq J_{(23,15)}(x,6)+D_{(23,15)}(x)$.
By calculating the derivative, we know that $f_{8}(x)$ is increasing and then decreasing over the interval $(1-2^{-\frac{41}{8}},1-2^{-\frac{65}{11}}]$,
and $f_{8}(x)$ takes its maximum value over $(1-2^{-\frac{41}{8}},1-2^{-\frac{65}{11}}]$ at $x_4\approx0.98053$.
Therefore, we obtain
\begin{equation*}
\begin{aligned}
 \frac{ A_{P}(L_{\nu})}{\max\{1,H(P)\}}  \leq f_{8}(P(1))\leq f_{8}(x_4)<2.0380
\end{aligned}
\end{equation*}
for all $P(1)\in(1-2^{-\frac{41}{8}},1-2^{-\frac{65}{11}}]$.

\item[7)] Case $1-2^{-\frac{65}{11}}<P(1)\leq 1-2^{-\frac{83}{13}}\approx0.98803$: From 6) of Lemma~\ref{lemma3}, we know that $c_1=34$, and $c_2=25$.
Consider $L=8$. From Equation~\eqref{eq82}, it follows that
\begin{equation*}
\begin{aligned}
\frac{ A_{P}(L_{\nu})}{\max\{1,H(P)\}}&\leq Q_{(34,25)}(P(1),P(2),8)+D_{(34,25)}(P(1))     \\
                                               &\overset{(a)}{\leq} J_{(34,25)}(P(1),8)+D_{(34,25)}(P(1)) ,
\end{aligned}
\end{equation*}
where $(a)$ is true because
\begin{equation*}
  R_{(c_1,c_2)}(P(1),L) =  3+D_{(34,25)}(P(1))\left(\log_{2}\big(1-P(1)\big)+\sum_{n=2}^{7}\log_{2}C_{n} -7\log_{2}C_{8}\right)<0
\end{equation*}
for all $P(1)\in(1-2^{-\frac{65}{11}},1-2^{-\frac{83}{13}}]$.
Let $f_{9}(x)\triangleq J_{(34,25)}(x,8)+D_{(34,25)}(x)$.
By calculating the derivative, we know that $f_{9}(x)$ is increasing and then decreasing over the interval $(1-2^{-\frac{65}{11}},1-2^{-\frac{83}{13}}]$,
and $f_{9}(x)$ takes its maximum value over $(1-2^{-\frac{65}{11}},1-2^{-\frac{83}{13}}]$ at $x_5\approx0.98735$.
Therefore, we obtain
\begin{equation*}
\begin{aligned}
 \frac{ A_{P}(L_{\nu})}{\max\{1,H(P)\}}\leq f_{9}(P(1))\leq f_{9}(x_5)<2.0376
\end{aligned}
\end{equation*}
for all $P(1)\in(1-2^{-\frac{65}{11}},1-2^{-\frac{83}{13}}]$.

\item[8)] Case $1-2^{-\frac{83}{13}}<P(1)\leq 1-2^{-\frac{103}{15}}\approx0.99143$: From 7) of Lemma~\ref{lemma3}, we know that $c_1=47$, and $c_2=37$.
If $L=12$, then
\begin{equation*}
 R_{(c_1,c_2)}(P(1),L) = 3+D_{(47,37)}(P(1))\left(\log_{2}\big(1-P(1)\big)+\sum_{n=2}^{11}\log_{2}C_{n} -11\log_{2}C_{12}\right).
\end{equation*}
Let $j_3(x)\triangleq 3+D_{(47,37)}(x)\left(\log_{2}\big(1-x\big)+\sum_{n=2}^{11}\log_{2}C_{n} -11\log_{2}C_{12}\right)$.
\textcolor{red}{By calculating the derivative}, we know that $j_{3}(x)$ is strictly decreasing over the interval $(1-2^{-\frac{83}{13}},1-2^{-\frac{103}{15}}]$, and the unique zero point over this interval is $x_{6}\approx0.98876$.
\begin{enumerate}
\item When $1-2^{-\frac{83}{13}}<P(1)\leq x_{6}$, based on Equation~\eqref{eq82}, it follows that
\begin{equation*}
\begin{aligned}
\frac{ A_{P}(L_{\nu})}{\max\{1,H(P)\}}&\leq Q_{(47,37)}(P(1),P(2),12)+D_{(47,37)}(P(1))     \\
                                               &\overset{(a)}{\leq} Q_{(47,37)}(P(1),1-P(1),12)+D_{(47,37)}(P(1)) ,
\end{aligned}
\end{equation*}
where $(a)$ is true because  $R_{(47,37)}(P(1),12)=j_3(P(1))\geq0$ for all $P(1)\in(1-2^{-\frac{83}{13}},x_{6}]$.
Let $f_{10}(x)\triangleq Q_{(47,37)}(x,1-x,12)+D_{(47,37)}(x)$.
By calculating the derivative, we know that $f_{10}(x)$ is strictly decreasing over the interval $(1-2^{-\frac{83}{13}},x_{6}]$.
Thus, we obtain
\begin{equation*}
\begin{aligned}
 \frac{ A_{P}(L_{\nu})}{\max\{1,H(P)\}}\leq f_{10}(P(1)) < f_{10}(1-2^{-\frac{83}{13}})<2.0375  \\
\end{aligned}
\end{equation*}
for all $P(1)\in(1-2^{-\frac{83}{13}},x_{6}]$.
\item When $x_{6}<P(1)\leq 1-2^{-\frac{103}{15}}$, owing to Equation~\eqref{eq82}, it follows that
\begin{equation*}
\begin{aligned}
\frac{ A_{P}(L_{\nu})}{\max\{1,H(P)\}}&\leq Q_{(47,37)}(P(1),P(2),12)+D_{(47,37)}(P(1))      \\
                                               &\overset{(a)}{\leq} J_{(47,37)}(P(1),12)+D_{(47,37)}(P(1)),
\end{aligned}
\end{equation*}
where $(a)$ is true because $R_{(47,37)}(P(1),12)=j_3(P(1))<0$ for all $P(1)\in(x_{6},1-2^{-\frac{103}{15}}]$.
Let $f_{11}(x)\triangleq J_{(47,37)}(x,12)+D_{(47,37)}(x)$.
By calculating the derivative, we know that $f_{11}(x)$ is increasing and then decreasing over the interval $(x_{6},1-2^{-\frac{103}{15}}]$,
and $f_{11}(x)$ takes its maximum value over $(x_{6},1-2^{-\frac{103}{15}}]$ at $x_{7}\approx0.99114$.
Therefore, we obtain
\begin{equation*}
\begin{aligned}
 \frac{ A_{P}(L_{\nu})}{\max\{1,H(P)\}}  \leq f_{11}(P(1))\leq f_{11}(x_{7})<2.0381
\end{aligned}
\end{equation*}
for all $P(1)\in(x_{6},1-2^{-\frac{103}{15}}]$.
\end{enumerate}

\item[9)] Case $1-2^{-\frac{103}{15}}<P(1)\leq 0.9927$: From 8) of Lemma~\ref{lemma3}, we know that $c_1=62$, and $c_2=51$.
If $L=15$, then
\begin{equation*}
 R_{(c_1,c_2)}(P(1),L) = 3+D_{(62,51)}(P(1))\left(\log_{2}\big(1-P(1)\big)+\sum_{n=2}^{14}\log_{2}C_{n} -14\log_{2}C_{15}\right).
\end{equation*}
Let $j_4(x)\triangleq 3+D_{(62,51)}(x)\left(\log_{2}\big(1-x\big)+\sum_{n=2}^{14}\log_{2}C_{n} -14\log_{2}C_{15}\right)$.
\textcolor{red}{By calculating the derivative}, we know that $j_{4}(x)$ is strictly decreasing over the interval $(1-2^{-\frac{103}{15}},0.9927]$, and the unique zero point over the interval is $x_{8}\approx0.99195$.
\begin{enumerate}
\item When $1-2^{-\frac{103}{15}}<P(1)\leq x_{8}$, due to Equation~\eqref{eq82}, it follows that
\begin{equation*}
\begin{aligned}
\frac{ A_{P}(L_{\nu})}{\max\{1,H(P)\}}&\leq Q_{(62,51)}(P(1),P(2),15)+D_{(62,51)}(P(1))     \\
                                               &\overset{(a)}{\leq} Q_{(62,51)}(P(1),1-P(1),15)+D_{(62,51)}(P(1)) ,
\end{aligned}
\end{equation*}
where $(a)$ is true because $R_{(62,51)}(P(1),15)=j_4(P(1))\geq0$ for all $P(1)\in(1-2^{-\frac{103}{15}},x_{8}]$.
Let $f_{12}(x)\triangleq Q_{(62,51)}(x,1-x,15)+D_{(62,51)}(x)$.
By calculating the derivative, we know that $f_{12}(x)$ is strictly decreasing over the interval $(1-2^{-\frac{103}{15}},x_{8}]$.
Thus, we obtain
\begin{equation*}
\begin{aligned}
 \frac{ A_{P}(L_{\nu})}{\max\{1,H(P)\}}\leq f_{12}(P(1)) < f_{12}(1-2^{-\frac{103}{15}})<2.0386  \\
\end{aligned}
\end{equation*}
for all $P(1)\in(1-2^{-\frac{103}{15}},x_{8}]$.
\item When $x_{8}<P(1)\leq 0.9927$, because of Equation~\eqref{eq82}, it follows that
\begin{equation*}
\begin{aligned}
\frac{ A_{P}(L_{\nu})}{\max\{1,H(P)\}}&\leq Q_{(62,51)}(P(1),P(2),15)+D_{(62,51)}(P(1))     \\
                                               &\overset{(a)}{\leq} J_{(62,51)}(P(1),15)+D_{(62,51)}(P(1)),
\end{aligned}
\end{equation*}
where $(a)$ is true because $R_{(62,51)}(P(1),15)=j_4(P(1))<0$ for all $P(1)\in(x_{8},0.9927]$.
Let $f_{13}(x)\triangleq J_{(62,51)}(x,15)+D_{(62,51)}(x)$.
By calculating the derivative, we know that $f_{13}(x)$ is strictly increasing over the interval $(x_{8},0.9927]$.
Therefore, we obtain
\begin{equation*}
\begin{aligned}
 \frac{ A_{P}(L_{\nu})}{\max\{1,H(P)\}}  \leq f_{13}(P(1))\leq f_{13}(0.9927)<2.0386
\end{aligned}
\end{equation*}
for all $P(1)\in(x_{8},0.9927]$.
\end{enumerate}

\item[10)] Case $0.9927<P(1)\leq 1-2^{-\frac{68}{9}} \approx0.99468$:
From 8) of Lemma~\ref{lemma3}, we know that $c_1=62$, and $c_2=51$.
Consider $L=16$. From Equation~\eqref{eq82}, it follows that
\begin{equation*}
\begin{aligned}
\frac{ A_{P}(L_{\nu})}{\max\{1,H(P)\}}&\leq Q_{(62,51)}(P(1),P(2),16)+D_{(62,51)}(P(1))      \\
                                               &\overset{(a)}{\leq} J_{(62,51)}(P(1),16)+D_{(62,51)}(P(1)),
\end{aligned}
\end{equation*}
where $(a)$ is true because
\begin{equation*}
  R_{(c_1,c_2)}(P(1),L)  = 3+D_{(62,51)}(P(1))\left(\log_{2}\big(1-P(1)\big)+\sum_{n=2}^{15}\log_{2}C_{n} -15\log_{2}C_{16}\right)<0
\end{equation*}
for all $P(1)\in(0.9927,1-2^{-\frac{68}{9}}]$.
Let $f_{14}(x)\triangleq J_{(62,51)}(x,16)+D_{(62,51)}(x)$.
By calculating the derivative, we know that $f_{14}(x)$ is increasing and then decreasing over the interval $(0.9927,1-2^{-\frac{68}{9}}]$,
and $f_{14}(x)$ takes its maximum value over $(0.9927,1-2^{-\frac{68}{9}}]$ at $x_{9}\approx0.99339$.
Therefore, we obtain
\begin{equation*}
\begin{aligned}
 \frac{ A_{P}(L_{\nu})}{\max\{1,H(P)\}} \leq f_{14}(P(1))\leq f_{14}(x_{9})<2.0386
\end{aligned}
\end{equation*}
for all $P(1)\in(0.9927,1-2^{-\frac{68}{9}}]$.

\item[11)] Case $1-2^{-\frac{68}{9}}<P(1)\leq 1-2^{-\frac{94}{11}} \approx0.99732$:
From 9) of Lemma~\ref{lemma3}, we know that $c_1=98$, and $c_2=85$.
Consider $L=18$. From Equation~\eqref{eq82}, it follows that
\begin{equation*}
\begin{aligned}
\frac{ A_{P}(L_{\nu})}{\max\{1,H(P)\}}&\leq Q_{(98,85)}(P(1),P(2),18)+D_{(98,85)}(P(1))      \\
                                     &\overset{(a)}{\leq} J_{(98,85)}(P(1),18)+D_{(98,85)}(P(1)),
\end{aligned}
\end{equation*}
where $(a)$ is true because
\begin{equation*}
  R_{(c_1,c_2)}(P(1),L)  = 3+D_{(98,85)}(P(1))\left(\log_{2}\big(1-P(1)\big)+\sum_{n=2}^{17}\log_{2}C_{n} -17\log_{2}C_{18}\right)<0
\end{equation*}
for all $P(1)\in(1-2^{-\frac{68}{9}},1-2^{-\frac{94}{11}}]$.
Let $f_{15}(x)\triangleq J_{(98,85)}(x,18)+D_{(98,85)}(x)$.
By calculating the derivative, we know that $f_{15}(x)$ is increasing and then decreasing over the interval $(1-2^{-\frac{68}{9}},1-2^{-\frac{94}{11}}]$,
and $f_{15}(x)$ takes its maximum value over $(1-2^{-\frac{68}{9}},1-2^{-\frac{94}{11}}]$ at $x_{10}\approx0.99586$.
Therefore, we obtain
\begin{equation*}
\begin{aligned}
 \frac{ A_{P}(L_{\nu})}{\max\{1,H(P)\}} \leq f_{15}(P(1))\leq f_{15}(x_{10})<2.0386
\end{aligned}
\end{equation*}
for all $P(1)\in(1-2^{-\frac{68}{9}},1-2^{-\frac{94}{11}}]$.

\item[12)] Case $1-2^{-\frac{94}{11}}<P(1)\leq 1-2^{-\frac{833}{65}} \approx0.99986$:
From 10) of Lemma~\ref{lemma3}, we know that $c_1=142$, and $c_2=127$.
Consider $L=18$. From Equation~\eqref{eq82}, it follows that
\begin{equation*}
\begin{aligned}
\frac{ A_{P}(L_{\nu})}{\max\{1,H(P)\}}&\leq Q_{(142,127)}(P(1),P(2),18)+D_{(142,127)}(P(1))      \\
                                     &\overset{(a)}{\leq} J_{(142,127)}(P(1),18)+D_{(142,127)}(P(1)),
\end{aligned}
\end{equation*}
where $(a)$ is true because
\begin{equation*}
  R_{(c_1,c_2)}(P(1),L)  = 3+D_{(142,127)}(P(1))\left(\log_{2}\big(1-P(1)\big)+\sum_{n=2}^{17}\log_{2}C_{n} -17\log_{2}C_{18}\right)<0
\end{equation*}
for all $P(1)\in(1-2^{-\frac{94}{11}},1-2^{-\frac{833}{65}}]$.
Let $f_{16}(x)\triangleq J_{(142,127)}(x,18)+D_{(142,127)}(x)$.
By calculating the derivative, we know that $f_{16}(x)$ is strictly decreasing over the interval $(1-2^{-\frac{94}{11}},1-2^{-\frac{833}{65}}]$.
Therefore, we obtain
\begin{equation*}
\begin{aligned}
 \frac{ A_{P}(L_{\nu})}{\max\{1,H(P)\}} \leq f_{16}(P(1))< f_{16}(1-2^{-\frac{94}{11}})<2.0372
\end{aligned}
\end{equation*}
for all $P(1)\in(1-2^{-\frac{94}{11}},1-2^{-\frac{833}{65}}]$.
\end{enumerate}
\end{proof}
\bibliographystyle{IEEEtran}
\bibliography{IEEEabrv,refs}
\end{document}